\documentclass[aps,pra,reprint,twocolumn,amsmath,amssymb,10pt,nofootinbib,superscriptaddress]{revtex4-2}

\usepackage{amsmath}
\usepackage{amssymb}
\usepackage{amsthm}
\usepackage[hidelinks]{hyperref}
\usepackage{graphicx}
\usepackage[caption=false]{subfig}
\usepackage{physics}
\usepackage{bbold}
\usepackage{xcolor}
  \definecolor{mydarkblue}{HTML}{0072b2}
  \definecolor{myorange}{HTML}{e69f00}
  \definecolor{bluishgreen}{HTML}{009e73}
  \definecolor{reddishpurple}{HTML}{cc79a7}
  \definecolor{mid-darkblue-orange}{HTML}{738959}
\newcommand{\ii}{\mathrm{i}}
\newcommand{\M}{\mathcal{M}}
\usepackage{tikz}
  \usetikzlibrary{decorations.pathmorphing, decorations.pathreplacing, decorations.shapes, calc}
\usepackage{tikz-3dplot}
  \tdplotsetmaincoords{70}{120}

\newtheorem{theorem}{Theorem}
\newtheorem{lemma}{Lemma}

\begin{document}

\title{Uncountably many inequivalent maximally entangled measurements for two qutrits}
\author{Elna Svegborn}
\thanks{These authors share the first authorship.}
\affiliation{Physics Department and NanoLund, Lund University, Box 118, 22100 Lund, Sweden}
\author{Jef Pauwels}
\thanks{These authors share the first authorship.}
\affiliation{Department of Applied Physics University of Geneva, 1211 Geneva, Switzerland}
\affiliation{Constructor University, Bremen, Germany}
\author{Nicolas Gisin}
\affiliation{Department of Applied Physics University of Geneva, 1211 Geneva, Switzerland}
\affiliation{Constructor University, Bremen, Germany}
\author{Alejandro Pozas-Kerstjens}
\affiliation{Department of Applied Physics University of Geneva, 1211 Geneva, Switzerland}

\begin{abstract}
    Every two-qubit measurement basis composed of maximally entangled eigenstates can be transformed into the Bell basis via local unitary operations.
    For higher dimensions, in contrast, there exist inequivalent bases composed of maximally entangled eigenstates.
    Here, we provide a single-parameter family of two-qutrit maximally entangled measurement bases, and demonstrate that none of the bases are equivalent to each other under local unitaries.
    These bases are constructed from the continuous family of symmetric informationally complete sets of states in dimension three.
    By studying the local unitary bases that generate the family of two-qutrit maximally entangled measurement bases, we construct the first examples of wild error bases in the smallest dimension where these can exist.
    Finally, we discuss how distinct measurements in the family lead to differences in performance in several scenarios relevant in quantum information.
\end{abstract}

\maketitle

A maximally entangled measurement is a quantum measurement whose eigenstates are all maximally entangled.
Such measurements are a central resource in quantum information science and play a key role in paradigmatic protocols such as quantum teleportation \cite{bennett1993}, dense coding \cite{bennett1992}, and entanglement swapping \cite{zukowski1993}.
The seminal example is the Bell state measurement, which projects a pair of qubits onto the orthonormal Bell basis.
This is also the unique two-qubit maximally entangled measurement, up to local unitary transformations \cite{Vollbrecht2000}.

However, when the particles have more than two internal levels, there exists in general multiple inequivalent classes of maximally entangled measurements \cite{Vollbrecht2000,Werner2001,popp2024}.
This implies that not all such measurement bases can be transformed to the generalized Bell basis of the same dimension via local operations \cite{baumgartner2006,popp2024}.
The existence of inequivalent, maximally entangled measurement bases leads to three natural questions: (i) how can we identify whether two maximally entangled measurements belong to distinct equivalence classes? (ii) which property makes two classes distinct from one another? and (iii) are different classes useful for different tasks?
Addressing these questions is both interesting for understanding the nature of maximally entangled measurements, and useful to know when a particular class of these measurements can offer advantages over another class.

In addition, identifying distinct classes of maximally entangled bases is equivalent to finding distinct orthonormal bases of unitary operators \cite{Vollbrecht2000}.
These can provide insights into the geometry of Hilbert space \cite{baumgartner2006}, in a similar way as mutually unbiased bases or symmetric informationally complete (SIC) sets of states do.
Such unitary bases are also broadly relevant in quantum science: they are relevant for quantum error correction codes \cite{knill1996,klappenecker2003,Klappenecker2005},and they provide a way to generate nonlocality in quantum networks \cite{networkReview}.

\begin{figure}
    \centering
    \resizebox{\linewidth}{!}{%
    \begin{tikzpicture}[
        line cap=round,
        line join=round,
        every node/.style={font=\footnotesize}
    ]
        \coordinate (A) at (-3.8,0.62);
        \coordinate (B) at (-4.55,-0.08);
        \coordinate (C) at (-3.1,-0.08);
        \coordinate (D) at (-3.82,-0.88);

        \draw[thick] (B) -- (A) -- (C);
        \draw[thick,dashed] (B) -- (C);
        \draw[thick] (A) -- (D);
        \draw[thick] (B) -- (D);
        \draw[thick] (C) -- (D);

        \fill[mydarkblue] (A) circle[radius=.08cm];
        \fill[myorange] (B) circle[radius=.08cm];
        \fill[bluishgreen] (C) circle[radius=.08cm];
        \fill[reddishpurple] (D) circle[radius=.08cm];
        \node[align=center] at (-3.82,-1.42) {SIC$_\varphi$\\[-1pt]$\{\psi_j^{(\varphi)}\}$};

        \draw[->,thick] (-2.75,0.00) -- node[above=2pt] {} (-1.65,0.00);

        \coordinate (O1) at (-0.80,-0.02);
        \coordinate (A1) at (-0.10,-0.25);
        \coordinate (B1) at (-1.22,0.45);
        \coordinate (C1) at (-1.34,-0.55);
        \coordinate (D1) at (-0.82,0.82);
        \coordinate (O2) at (0.42,0.12);
        \coordinate (A2) at (1.12,-0.10);
        \coordinate (B2) at (0.00,0.62);
        \coordinate (C2) at (-0.12,-0.38);
        \coordinate (D2) at (0.40,0.97);
        \draw[->,thick,mydarkblue] (O1) -- (A1);
        \draw[->,thick,myorange] (O1) -- (B1);
        \draw[->,thick,bluishgreen] (O1) -- (C1);
        \draw[->,thick,reddishpurple] (O1) -- (D1);
        \draw[->,thick,mydarkblue] (O2) -- (A2);
        \draw[->,thick,myorange] (O2) -- (B2);
        \draw[->,thick,bluishgreen] (O2) -- (C2);
        \draw[->,thick,reddishpurple] (O2) -- (D2);
        \draw[decorate,decoration={coil,aspect=0}] (O1) -- (O2);
        \node[align=center] at (-0.18,-1.42) {ME basis$_\varphi$\\[-1pt]$\{\Phi_j^{(\varphi)}\}$};

        \node at (2.08,0.00) {\Large$\Longleftrightarrow$};

        \node at (3.95,0.02) {
            \begin{minipage}{2.25cm}
                \centering
                $\mathcal{U}_{\varphi}=\{{\color{mydarkblue}{\blacksquare}},{\color{myorange}{\blacksquare}},{\color{bluishgreen}{\blacksquare}},{\color{reddishpurple}{\blacksquare}}\}$\\[5pt]
                ${\color{mydarkblue}{\blacksquare}}\circ{\color{myorange}{\blacksquare}}={\color{mid-darkblue-orange}{\blacksquare}}\not\in\mathcal{U}_{\varphi}$
            \end{minipage}
            };
        \node[align=center] at (3.95,-1.42) {UEB$_\varphi$\\[-1pt]$\mathcal{U}_\varphi$};
    \end{tikzpicture}%
    }
    \caption{
        Summary of the work.
        The starting point are SIC sets of states.
        From any of them, Ref.~\cite{Czartowski2021} gives a way for constructing families of bipartite quantum measurements.
        For local dimensions 2 and 3, the corresponding family contains maximally entangled measurements, and since in dimension 3 there exists a continuous family of SICs, the construction leads to a continuous family of maximally entangled measurements.
        The sets of unitaries that generate the eigenstates form, in general, wild error bases, that are the smallest examples known so far.
        By characterizing the equivalence of these bases under unitary transformations, we determine which of the measurements in the family under local unitary operations.
    }
    \label{fig:summary}
\end{figure}

In this work we focus on the most elementary, nontrivial setting for studying distinct classes of maximally entangled measurements, namely two-qutrit measurements.
We give a single-parameter family of two-qutrit measurement bases composed of maximally entangled states.
This family is obtained from qutrit SIC sets of states using the construction of Ref.~\cite{Czartowski2021}, and yields the generalized two-qutrit Bell state measurement as a special case.
Within the family, we identify the equivalence classes under local unitary operations, i.e., the unique different measurements that it encodes.
We identify that the family contains a continuously infinite amount of inequivalent measurement bases, which stands in stark contrast with the two-qubit case.

Furthermore, by studying the structure of the family, we find that only the generalized Bell basis and its equivalents are generated by nice unitary error bases, which are projectively closed under multiplication \cite{Werner2001, popp2024,klappenecker2003,Klappenecker2005}.
In contrast, all other members of the family are generated from so-called \textit{wild}, or \textit{wicked}, error bases \cite{klappenecker2003}.
Whether an error basis is nice or wild is relevant, for instance, when correcting for accumulating errors in quantum computations and in quantum repeater chains.
To this end, we present the first examples of wild error bases in dimension three, the previous known results being in dimension four or higher \cite{beckman2001,Klappenecker2005,musto2016}.

Thereafter, we analyze the complexity of the measurements in the family by studying the entanglement cost required for its remote implementation \cite{Pauwels2025}.
Despite each member of the family satisfying the necessary conditions for ideal remote implementation \cite{popescu1994}, we find that none, beyond those equivalent to the Bell state measurement, can be implemented ideally between distant parties, regardless of the amount of pre-shared entanglement.
Nonetheless, we identify the members of the family which admit a non-ideal, efficient localization.

Lastly, we demonstrate that the different maximally entangled measurements also present operational differences.
We illustrate this in the context of quantum computing (in particular, in magic state injection protocols), and in that of quantum communication via quantum repeaters.

\paragraph*{From SICs to maximally entangled measurements.---}
A SIC in dimension $d$ is a set of $d^2$ pure states, $\{\ket{\psi_i}\}_{i=1}^{d^2}$, with the property that the overlap between any pair of such states is identical, i.e., $\left|\braket{\psi_i}{\psi_j}\right|^2=\frac{1+d\delta_{ij}}{1+d}$ for all $i,j$ \cite{Renes2004}.
From any SIC we can construct a two-particle eigenbasis, $\{|\Phi_i^{(\alpha)}\rangle\}_{i=1}^{d^2}$, where the $i$-th eigenvector is constructed from the $i$-th state in the set as follows~\cite{Czartowski2021}:
\begin{equation}
  \ket*{\Phi_i^{(\alpha)}}=\sqrt{1+\frac{1}{d}}\ket{\psi_i,\psi_i^*}-e^{\ii\alpha}f_\pm(\alpha)\ket{\Omega^+}.
  \label{eq:family}
\end{equation}
Here, $\ket{\psi_i^*}$ denotes the complex conjugate of $\ket{\psi_i}$, $\ket{\Omega^+} = \frac{1}{\sqrt{d}}\sum_{j=0}^{d-1}\ket{j,j}$ is the maximally entangled unit vector, and $f_\pm(\alpha)=\frac{1}{d}\left[\sqrt{(d+1)\cos^2\alpha}\pm\sqrt{(d+1)\cos^2\alpha-d}\right]$ is a function in the real parameter $\alpha$.
By construction, each of the eigenvectors $\ket*{\Phi_i^{(\alpha)}}$ has the same degree of entanglement, meaning that the measurement basis is isoentangled \cite{santo2024}.

The eigenvectors \eqref{eq:family} are maximally entangled if and only if $\cos\alpha=\pm\frac{\sqrt{d+1}}{2}$, see Appendix \ref{app:SICmeas} for details.
This implies that the eigenstates in Eq.~\eqref{eq:family} can only generate a maximally entangled basis in dimensions $d=2,\,3$, since the magnitude of $\cos \alpha$ can at most reach unity.
For two-qubit systems, it is well-known that all maximally entangled bases are equivalent to the standard Bell basis under local unitaries \cite{Vollbrecht2000}.
Thus, we focus on the construction obtained with $\alpha = 0$ in $d = 3$.

In dimension three there exists a continuous family of SICs \cite{Renes2004}.
This family can be obtained from the action of the Weyl-Heisenberg group on a fiducial state characterized by a single parameter $\varphi$, i.e., 
\begin{equation}
    \ket*{\psi_{j}^{(\varphi)}} = X^{j_1}\cdot Z^{j_2} \ket*{\phi^{(\varphi)}},
    \label{eq:family_SIC}
\end{equation}
where $j \equiv j_1 j_2 \in \{0,1,2\}^2$, and $\ket*{\phi^{(\varphi)}} = \frac{1}{\sqrt{2}}(\ket{1}-e^{\ii \varphi}\ket{2})$ is a valid fiducial state for any $\varphi\in \left[0,\pi/3\right]$.
The group generators can be chosen as the so-called shift and clock matrices, namely $X=\sum_{i=0}^2\ketbra{i\oplus1}{i}$, and $Z=\sum_{i=0}^2\omega^i\ketbra{i}{i}$ with eigenvectors$\omega=e^{ 2\pi \ii /3}$.
Importantly, any two such SICs, $\{|\psi_j^{(\varphi_1)}\rangle\}_j$ and $\{|\psi_j^{(\varphi_2)}\rangle\}_j$, are equivalent if and only if $\varphi_2=\frac{2\pi}{9}\pm\varphi_1$ \cite{Zhu2010}.

Inserting the expression for $\ket*{\psi_{j}^{(\varphi)}}$ in Eq.~\eqref{eq:family_SIC} into Eq.~\eqref{eq:family}, and taking $\alpha = 0$, we obtain a continuous set of maximally entangled measurements parameterized by $\varphi$.
For fixed $\varphi$, the corresponding eigenstates are
\begin{equation}
    \ket*{\Phi_j^{(\varphi)}}=\frac{2}{\sqrt{3}}\ket*{\psi_j^{(\varphi)},\psi_j^{(\varphi)}{}^*}-\ket{\Omega^+},\qquad j=1,\ldots,9.
    \label{eq:familyM}
\end{equation}
We give the explicit form of the measurement eigenstates in Appendix~\ref{app:M}.

Moreover, any maximally entangled basis can be generated from an orthonormal single-site unitary basis acting on a maximally entangled unit state \cite{Vollbrecht2000}.
Taking this unit state to be $\ket*{\Phi^{(\varphi)}_1}$ in Eq.~\eqref{eq:familyM}, each eigenstate of the measurement for phase $\varphi$ can be expressed as $ \ket*{\Phi_j^{(\varphi)}} = (U_j^\varphi\otimes\mathbb{1})\ket*{\Phi_1^{(\varphi)}}$ where the unitary matrices $U^\varphi_j$ belong to the set
\begin{equation}
    \mathcal{U}_\varphi =\left\{\, Z^{a},\; Z^{a}\cdot X_{+}^{(\varphi)},\; Z^{a}\cdot X_{-}^{(\varphi)}\ \middle|\ a\in\{0,1,2\}\right\},
    \label{eq:familyU}
\end{equation}
with $Z$ being the standard clock operator and
\begin{equation}
    X_{+}^{(\varphi)}=\begin{pmatrix}
        0 & 1 & 0 \\
        0 & 0 & 1 \\
        e^{3\ii\varphi} & 0 & 0
    \end{pmatrix},\quad
    X_{-}^{(\varphi)}=\begin{pmatrix}
        0 & 0 & 1 \\
        1 & 0 & 0 \\
        0 & e^{-3\ii\varphi} & 0
    \end{pmatrix}.
\end{equation}
These unitaries satisfy the commutation relations
\begin{equation}
    Z\cdot X^{(\varphi)}_+=\omega^{-1}X^{(\varphi)}_+\cdot Z,\qquad
    Z\cdot X^{(\varphi)}_-=\omega X^{(\varphi)}_-\cdot Z,
\end{equation}
meaning that $(Z,X_+^{(\varphi)})$ and $(Z,X_-^{(\varphi)})$ behave as Weyl pairs with opposite commutator phases, up to the central phases $(X_+^{(\varphi)})^3=e^{3\ii\varphi}\mathbb{1}$ and $(X_-^{(\varphi)})^3=e^{-3\ii\varphi}\mathbb{1}$.

\paragraph*{Unitary error bases and equivalent measurements.---}
Out of all the measurement bases in the family, it is important to determine which ones correspond to the same physical measurement (i.e., equivalent modulo local unitary transformations).
If the two-qutrit measurements were built out of non-maximally entangled eigenstates (i.e., if $f_\pm(\alpha)^2<1$), it is easy to see that equivalence of the SICs generating the bases is in one-to-one correspondence to equivalence of the measurements generated (see proof in Appendix~\ref{app:equivSIC}).
However, the case of interest, namely when the bases are maximally entangled, must be treated separately.
The relevance of $\mathcal{U}_\varphi$ is not only that it generates the maximally entangled bases, but also that it can be used as the ground for addressing equivalence.
More precisely, we consider
\begin{equation}
    \ket{\Phi_i}=(U_i\otimes\mathbb{1})\ket{\Omega^+},
    \qquad
    \ket{\Psi_i}=(V_i\otimes\mathbb{1})\ket{\Omega^+},
\end{equation}
for some orthonormal set of unitaries $\{U_i\}_i$ and $\{V_i\}_i$.
These two maximally entangled bases are locally unitarily equivalent if and only if the two unitary bases $\{U_i\}$ and $\{V_i\}$ are equivalent in the standard left-right sense, namely if there exist unitaries $A$, $B$, phases $e^{\ii\theta_i}$, and a permutation $\pi$ such that
\begin{equation}
    U_i=e^{\ii\theta_i}A\cdot V_{\pi(i)}\cdot B \qquad\text{for all }i.
    \label{eq:equiv}
\end{equation}
This is is directly a consequence of the vectorization identity $(C\otimes D)\ket{\Omega^+}=(C\cdot D^T\otimes\mathbb{1})\ket{\Omega^+}$, see Appendix~\ref{app:equiv}.
Hence, it is enough to classify the unitary bases $\mathcal U_\varphi$ up to this equivalence relation.

The complete characterization of equivalent and inequivalent bases within the family is captured in the theorem below
\begin{theorem}\label{thm:equivM}
    For real parameters $\varphi_1$, $\varphi_2$, the two unitary error bases $\mathcal U_{\varphi_1}$ and $\mathcal U_{\varphi_2}$ (and, by extension, the corresponding maximally entangled qutrit bases) are equivalent if and only if
    \begin{equation*}
        \varphi_1\pm\varphi_2=\frac{2\pi}{9}k\quad\text{for some } k\in\mathbb Z.
    \end{equation*}
\end{theorem}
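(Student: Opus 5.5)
The plan is to prove the two directions separately. Sufficiency uses explicit local unitaries $A,B$ realising the two generating symmetries $\varphi\mapsto\varphi+\frac{2\pi}{9}$ and $\varphi\mapsto-\varphi$. Necessity uses an invariant of unitary bases under the equivalence \eqref{eq:equiv} that detects $\cos(9\varphi)$. The condition $\varphi_1\pm\varphi_2\in\frac{2\pi}{9}\mathbb Z$ is exactly $\cos 9\varphi_1=\cos 9\varphi_2$, so it suffices to show that the equivalence class of $\mathcal U_\varphi$ determines, and is determined by, $\cos 9\varphi$.

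\emph{Sufficiency.} For the shift, take $A=D$ and $B=D^\dagger$ with $D=\mathrm{diag}(1,\lambda,\lambda^2)$ and $\lambda=e^{2\pi\ii/9}$. Conjugation by $D$ fixes $Z$ and rescales each off-diagonal entry of $X_\pm^{(\varphi)}$. A direct computation gives
\[
DX_+^{(\varphi)}D^\dagger=\lambda^{-1}X_+^{(\varphi+2\pi/9)},\qquad DX_-^{(\varphi)}D^\dagger=\lambda X_-^{(\varphi+2\pi/9)},
\]
since $\lambda^3=\omega=e^{3\ii\cdot 2\pi/9}$. Hence $D\,\mathcal U_\varphi D^\dagger=\mathcal U_{\varphi+2\pi/9}$ up to phases, with the identity permutation.

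For the reflection, take $P$ to be the permutation matrix swapping $\ket1\leftrightarrow\ket2$ and fixing $\ket0$. Then $PZP=Z^2$, and $PX_+^{(\varphi)}P$ is a weighted $3$-cycle with the same support as $X_-^{(\psi)}$. The product of the three nonzero entries along a $3$-cycle is invariant under diagonal conjugation, and diagonal conjugation can adjust the individual entries freely subject to that product. So after a further diagonal conjugation and a global phase, $PX_+^{(\varphi)}P$ becomes $X_-^{(-\varphi)}$, and symmetrically $PX_-^{(\varphi)}P$ becomes $X_+^{(-\varphi)}$. Since $Z^a$ maps to $Z^{2a}$, the set $\mathcal U_\varphi$ maps to $\mathcal U_{-\varphi}$ up to phases and a permutation. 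The same cycle-product bookkeeping confirms that only $e^{3\ii\varphi}$ modulo the relabelings matters. Composing these two moves gives every $\varphi_2$ with $\varphi_1\pm\varphi_2\in\frac{2\pi}{9}\mathbb Z$.

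\emph{Necessity.} Under $U_i\mapsto e^{\ii\theta_i}AU_{\pi(i)}B$, every quotient $U_i^\dagger U_j$ is conjugated by $B$ and acquires only a phase. Therefore the multiset
\[
\mathcal T(\mathcal U)=\bigl\{\,|\tr(U_i^\dagger U_jU_k^\dagger U_l)|\,\bigr\}_{i,j,k,l}
\]
is an invariant of the equivalence class, and so are the analogous multisets built from longer alternating words. I will compute these for $\mathcal U_\varphi$.

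Every alternating word is a product of $Z$'s and of $X_\pm^{(\varphi)}$'s and their adjoints. Each such factor is a diagonal matrix times a cyclic shift, so the word is a monomial matrix of the form $Z^cX^m$ with phases attached. Its trace vanishes unless the net shift $m$ is $0\bmod 3$, and in that case the word is diagonal. Its diagonal entries are $\omega$-powers multiplied by $e^{\pm3\ii\varphi}$, where the exponents depend on how many "wrap-around" entries of $X_+$ versus $X_-$ are traversed at each diagonal position.

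The key point is that $X_-^{(\varphi)}$ is not proportional to $(X_+^{(\varphi)})^\dagger$: the phase sits on a different entry. Consequently words such as $X_+X_-^\dagger$, or $X_+ Z X_-^\dagger Z^\dagger$ and their products, are diagonal matrices of the form $\mathrm{diag}(e^{3\ii\varphi}\omega^{r_0},\omega^{r_1},\omega^{r_2})$ up to a global phase. Their traces have modulus $|e^{3\ii\varphi}\omega^{r}+\omega^{s}+\omega^{t}|$, which depends on $\cos(3\varphi+2\pi k/3)$. The multiset of such values over all choices of $a,b$ is a symmetric function of the three numbers $\cos(3\varphi+2\pi k/3)$, $k=0,1,2$. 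Such symmetric functions are determined by $\cos 9\varphi$: for instance $\prod_k\cos(3\varphi+2\pi k/3)=\tfrac14\cos 9\varphi$.

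\emph{Main obstacle.} The hardest step is showing that the invariant actually separates classes, not merely that it is compatible with them. I must check that the multiset $\mathcal T$ (or a suitable length-$6$ analogue) genuinely recovers $\cos 9\varphi$ and is not constant in $\varphi$. The natural first attempt is to isolate the specific quadruple-quotient $U_i^\dagger U_jU_k^\dagger U_l$ with $U_i=X_+$, $U_j=Z^aX_-$ and the like. I would then compute the largest value in $\mathcal T$, or the sum of its sixth powers, and verify that it is a strictly monotone function of $\cos 9\varphi$ on $[0,\pi/9]$. If the length-$4$ invariants turn out to be $\varphi$-independent, which may happen because pairs of $\omega$-phases cancel, I would move to the length-$6$ words. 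There the phase $e^{9\ii\varphi}=(e^{3\ii\varphi})^3$ appears directly in a single diagonal entry.
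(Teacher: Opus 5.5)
Your plan has the same architecture as the paper's proof: the same diagonal $D$ for $\varphi\mapsto\varphi+2\pi/9$, a reflection, and trace invariants of alternating words. However, the reflection step fails as constructed, and the necessity direction is never actually established.

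\emph{The reflection.} Conjugation by a diagonal $\Delta$ fixes every diagonal matrix. So after conjugating by your $P$ (swapping $\ket{1}\leftrightarrow\ket{2}$) and then by any $\Delta$, the image of $X_+^{(\varphi)}X_-^{(\varphi)}=\operatorname{diag}(1,\bar q,q)$, with $q=e^{3\ii\varphi}$, is $\operatorname{diag}(1,q,\bar q)$. In $\mathcal U_\psi$ the only elements with the supports of $PX_\pm^{(\varphi)}P$ are multiples of $Z^aX_\mp^{(\psi)}$. Moreover, $Z^aX_-^{(\psi)}Z^bX_+^{(\psi)}\propto Z^{a+b}\operatorname{diag}(e^{3\ii\psi},1,e^{-3\ii\psi})$, whose entries are proportional to $(1,r,r^2)$. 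Matching this with $(1,q,\bar q)$ forces $\bar q=q^2$, i.e.\ $e^{9\ii\varphi}=1$.

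Your cycle-product argument adjusts each 3-cycle separately. A single $\Delta$ must serve both, however: it rescales entry $(j,k)$ by $d_j\bar d_k$ and the reversed entry $(k,j)$ by the conjugate factor. A transposition that does work is $\ket{0}\leftrightarrow\ket{1}$ dressed by $\operatorname{diag}(1,e^{2\pi\ii/9},e^{-2\pi\ii/9})$, which is the paper's $R$. It gives $RX_+^{(\varphi)}R^\dagger\propto X_-^{(2\pi/9-\varphi)}$ and $RX_-^{(\varphi)}R^\dagger\propto ZX_+^{(2\pi/9-\varphi)}$. There is also a minor slip in the shift step: $DX_-^{(\varphi)}D^\dagger\propto ZX_-^{(\varphi+2\pi/9)}$, not $\propto X_-^{(\varphi+2\pi/9)}$, though this is harmless at the level of sets.

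\emph{Necessity.} This direction carries the content of the theorem, and you leave it open; parts of your sketch are also wrong.
\begin{itemize}
\item $X_+X_-^\dagger$ and $X_+ZX_-^\dagger Z^\dagger$ are not diagonal. The reason is that $X_-^{(\varphi)\dagger}=\operatorname{diag}(1,q,\bar q)X_+^{(\varphi)}$ has the same support as $X_+^{(\varphi)}$.
\item The $\varphi$-dependent diagonal words pair $X_+$ with $X_-$. For example, $U_i^\dagger U_jU_k^\dagger U_l\propto Z^nX_+^{(\varphi)}X_-^{(\varphi)}=Z^n\operatorname{diag}(1,\bar q,q)$, whose trace has modulus $|1+2\cos(3\varphi-2\pi n/3)|$. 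This is not the form you wrote.
\item Your proposed test via the largest value of $\mathcal T$ is constant: it always equals $3$, attained at $i=j$, $k=l$.
\item Observing that symmetric functions of $c_r=\cos(3\varphi+2\pi r/3)$ depend only on $\cos9\varphi$ shows only that the invariant is compatible with the equivalence. You need the converse: that the invariant recovers $\cos9\varphi$.
\end{itemize}

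The paper's explicit enumeration supplies exactly this. Apart from zeros and $405$ values equal to $3$, the entries of $\mathcal T$ are $|1+2c_r|$ for $r=0,1,2$, each occurring $324$ times. Use $\sum_rc_r=0$, $\sum_{r<s}c_rc_s=-\tfrac34$ and $\prod_rc_r=\tfrac14\cos9\varphi$. Newton's identities then give $\sum_r(1+2c_r)^4=57+24\cos9\varphi$. Hence $I(\mathcal U_\varphi)=51273+7776\cos9\varphi$, which is injective in $\cos9\varphi$, so length-four words already suffice. Your sixth-power sum would also work, since $\sum_r(1+2c_r)^6=417+300t+12t^2$ with $t=\cos9\varphi$ is increasing on $[-1,1]$. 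But that only follows once this enumeration has been carried out.
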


\begin{proof}
In order to determine the equivalence of two bases, $\mathcal U_{\varphi_1}$ and $\mathcal U_{\varphi_2}$, we use the following object
\begin{equation}
    I(\mathcal U) = \sum_{i,j,k,l=1}^9 \left|\Tr\!\left(U_i^\dagger U_jU_k^\dagger U_l\right)\right|^4.
    \label{eq:invariant}
\end{equation}
This object is invariant under local operations of the form of Eq.~\eqref{eq:equiv}.
In Appendix~\ref{app:equiv} we compute it exactly for this family, obtaining
\[
    I(\mathcal U_\varphi)=51273+7776\cos(9\varphi).
\]
Therefore, $\mathcal U_{\varphi_1}$ and $\mathcal U_{\varphi_2}$ being equivalent implies $\cos(9\varphi_1)=\cos(9\varphi_2)$, or, alternatively, $\varphi_1\pm\varphi_2=2\pi k/9$.
Conversely, also in Appendix~\ref{app:equiv} we give the unitaries implementing the transformations $\varphi\mapsto\varphi+2\pi/9$ and $\varphi\mapsto2\pi/9-\varphi$ and these generate all the identifications.
\end{proof}

Note that this condition, $\varphi_1\pm\varphi_2= 2\pi k/9$, is the same as the condition for the generating SICs to be equivalent \cite{Zhu2010}.
Thus, in analogy with the case of SICs, the measurement for any $\varphi\in\left[0,\frac{\pi}{3}\right]$ is only equivalent to a finite number of other measurements.
Therefore, there exists a continuously infinite number of maximally entangled inequivalent measurements for two qutrits.

\paragraph*{Wild unitary error bases for $d=3$.---}
A central consequence of Theorem \ref{thm:equivM} is that the complete characterization of equivalent unitary error bases $\mathcal U_\varphi$ allows us to give the first (in fact, a continuous infinity of them) known examples of wild error bases for $U(3)$.

A nice error basis is, up to the equivalence in Eq.~\eqref{eq:equiv}, a basis arising from a projective unitary representation of a group of order $d^2$.
An example of nice error bases is the standard representation of the Weyl-Heisenberg group, that generates the generalized Bell state measurement.
A wild, or wicked, error basis is one that is not equivalent to any nice error basis \cite{klappenecker2003,Klappenecker2005}.

It turns out that all bases of the form of Eq.~\eqref{eq:familyU} not equivalent to the generalized Bell basis are wild.
Indeed, if $e^{9\ii\varphi}=1$, Theorem~\ref{thm:equivM} makes $\mathcal U_\varphi$ equivalent to $\mathcal U_0$, and $\mathcal U_0=\{Z^a\cdot X^b:a,b\in\mathbb Z_3\}$ (i.e., the standard representation of the Weyl-Heisenberg group).
Conversely, any three-dimensional nice error basis is equivalent to the qutrit Weyl-Heisenberg basis \cite{Werner2001,klappenecker2003}, so if $\mathcal U_\varphi$ were nice it would be equivalent to $\mathcal U_0$, and Theorem~\ref{thm:equivM} then gives $e^{9\ii\varphi}=1$.
Thus, every $\varphi$ with $e^{9\ii\varphi}\neq1$ gives a wild error basis.

Previous works had identified wild error bases for $U(4)$ \cite{beckman2001,Klappenecker2005,musto2016}.
The $\mathcal{U}_{\varphi\not=\frac{2\pi}{9}k}$, which are bases for $d=3$, constitute (to the best of our knowledge) the first known examples of wild error bases for $U(3)$.
Note that this dimension is minimal, since in dimension $d=2$ every unitary error basis is equivalent to the Pauli basis \cite{Werner2001}, which is a nice error basis.

By considering $\mathcal{U}_\varphi$ as a generating set and explicitly computing the associated projective group order (i.e., the size of the resulting set closed under multiplication, modulo global phases) for all values of $0 <\varphi<\frac{\pi}{9}$ that can be written as fractions of $\pi$ with denominator at most 100, we find that the basis that generates the smallest group (after the Weyl-Heisenberg group for $\varphi=0$) is that corresponding to $\varphi=\frac{\pi}{9}$.
The corresponding projective group order is 36.
Whether this order can be reduced via unitary equivalences (recall, Eq.~\eqref{eq:equiv}), and whether there exist other wild error bases that generate groups of smaller order, remain open questions.

\paragraph*{Measurement complexity and localizability.---}
Recently, a new framework has been developed to characterize the complexity of multipartite quantum measurements \cite{Pauwels2025}.
Motivated by studies on the tension between global quantum operations and special relativity \cite{beckman2001,cirac2001,dur2001,vaidman2003,groisman2002,Groisman2003}, this framework quantifies the complexity of a measurement in terms of the nonlocal resources required to perform it on spatially separated subsystems using non-adaptive local quantum operations, classical joint post-processing, and shared entanglement.
While the statistics of any joint measurement can be simulated (or \textit{localized}) in this way with unlimited entanglement \cite{groisman2002,Groisman2003,vaidman2003,clark2010}, only few examples are known that require little entanglement \cite{Pauwels2025}.
In this context it is important to distinguish between reproducing the measurement statistics alone and reproducing the full measurement process, including the post-measurement states.
The latter, known as \textit{ideal localization}, is possible only if the eigenstates of the measurement do not encode any local information \cite{popescu1994}.
This is precisely the case for measurements composed of maximally entangled eigenstates.
In the special case of Bell state measurement, it is further known that the full measurement process can be implemented locally using only two Bell pairs \cite{popescu1994}.

Thus, a natural question is for which values of $\varphi$ the two-qutrit maximally entangled measurements can be localized ideally.
Surprisingly, a direct consequence of Ref.~\cite{akibue2026} is that ideal localization is possible only for those measurements that are equivalent to the generalized Bell state measurement.
This stems from the fact that ideal localization additionally requires that the single-site unitaries generating the measurement eigenstates form a nice error basis \cite{akibue2026}.
The same condition is necessary to localize measurements non-ideally using a single state $\ket{\Omega^+}$ as the source of shared entanglement \cite{akibue2026}.
Thus, localizing any measurement in the family, with the exception of (local-unitary equivalents of) the Bell state measurement, also requires at least two copies of $\ket{\Omega^+}$.

In particular, this means that no measurement in the family but those equivalent to the Bell state measurement belong to the Clifford group.
This is a consequence of the fact that the Clifford group (see its definition in Appendix~\ref{App:Clifford}) is a subset of the set of measurements that can be localized with one single $\ket{\Omega^+}$ \cite{Pauwels2025}.
This is relevant in the generation of nonlocality in networks that distribute maximally entangled states, where it is known that Clifford operations cannot produce nonlocal correlations \cite{Gatto_Lamas_2023}.

Then, how much entanglement is needed to localize a given measurement in the family?
In general, this question is difficult to address \cite{Pauwels2025}.
However, the structure of the family enables simplifications.
By combining Eqs.~\eqref{eq:family} and \eqref{eq:family_SIC}, it is easy to see that the eigenstates of the measurement are generated by the action of the abelian group $\langle Z \otimes Z^*, X \otimes X\rangle$ on the fiducial state $\ket{\Phi_\varphi}=\frac{2}{\sqrt{3}}\ket*{\phi^{(\varphi)}}\otimes\ket*{\phi^{(\varphi)}}-\ket{\Omega^+}$.
This results in the fact that all measurements in the family exhibit so-called \textit{tetrahedral symmetry}, and thus its position in the Clifford hierarchy can easily be determined by studying that of a single diagonal matrix \cite{pauwels2025b}.
In Appendix~\ref{App:localization} we show that the level in the Clifford hierarchy can be easily determined for phases of the type $\varphi=\frac{2\pi k}{3^l}$ for integers $k$ and $l$.
Interestingly, the measurement that generates the wild error basis of the smallest projective group order, $\varphi=\frac{\pi}{9}$, is not of this form and thus does not belong to a finite level of the Clifford hierarchy.

We illustrate the phases of the measurements that appear at low levels in Table~\ref{tab:clifford}.
According to this, the measurement associated with $\varphi = \frac{2\pi}{27}$ is the simplest example of non-Clifford maximally entangled measurement.

\begin{table}[ht!]
    \begin{tabular}{c|cccccccccccccc}
        $\varphi$ & 0 & $\frac{2 \pi}{243}$ & $\frac{4 \pi}{243}$ & $\frac{2 \pi}{81}$ & $\frac{8 \pi}{243}$ & $\frac{10 \pi}{243}$ & $\frac{4 \pi}{81}$ & $\frac{14 \pi}{243}$ & $\frac{16 \pi}{243}$ & $\frac{2 \pi}{27}$ &  $\frac{20 \pi}{243}$ & $\frac{22 \pi}{243}$  &$\frac{8 \pi}{81}$ & $\frac{26 \pi}{243}$  \\[0.2em] \hline
        $k$ & 2 & 9 & 9 & 7 & 9 &  9 & 7 & 9 & 9 & 5 & 9 & 9 & 7 & 9
    \end{tabular}
    \caption{Phases $\varphi$ associated with the maximally entangled measurements in Eq.~\eqref{eq:familyM} that appear at low Clifford level $k$.}
    \label{tab:clifford}
\end{table}

\paragraph*{Operational implications.---} 
We now address the final question, namely whether the different unitary error bases $\mathcal U_\varphi$ lead to operational differences in quantum information tasks.

First, $\mathcal{U}_\varphi$ is a wild error basis whenever $\varphi\not=\frac{2\pi}{9}k$.
In the context of quantum computing, this implies that concatenation of errors in the space of the basis can enlarge the set of necessary correction unitaries \cite{klappenecker2003,Klappenecker2005}.
In fact, since every two compositions of the set with itself generates powers of $X^{(\varphi)}_+\cdot X^{(\varphi)}_-=\text{diag}(1,e^{-3\ii\varphi},e^{3\ii\varphi})$ the set of corrections contains infinitely many distinct unitaries whenever $\varphi/\pi$ is irrational.
This also implies that the choice of unitary error basis becomes important in the context of long-distance entanglement distribution.
If, in a chain of quantum repeaters, the intermediate nodes perform maximally entangled measurements whose generating unitaries correspond to nice error bases, the set of corrections that the end parties need to perform does not increase with the length of the chain.
If they use maximally entangled measurements generated by wild error bases instead, the number of necessary corrections may grow up to indefinitely with increasing chain length.

Second, we note that our family of measurements enables a form of \textit{non-stabilizerness injection} using only stabilizer states.
Non-stabilizerness (or \textit{magic}) is necessary for universal quantum computation \cite{gottesman1998}, and the amount of it in a quantum circuit is connected to the complexity of simulating that circuit classically \cite{Braviy2016}.
In qubit-based quantum computing, non-stabilizerness is inserted into a circuit by performing non-Clifford operations on a qubit and using teleportation \cite{Gottesman_1999}, or by performing distillation protocols \cite{Braviy2005}.
Note that in the teleportation case, the access to single-qubit non-Clifford operations is necessary to inject non-stabilizerness, since all two-qubit maximally entangled measurements are equivalent to the Bell state measurement.
For qutrits, in contrast, our family of measurements allows injection without explicitly necessitating single-qutrit non-Clifford operations.
Thus, the maximally entangled measurements described in this work point to potential, concrete advantages of quantum computing with higher-dimensional systems over that based on qubits.
Note, however, that it is currently not known whether injecting non-stabilizerness via non-Bell maximally entangled measurements is more efficient than performing Bell state measurements complemented with non-Clifford local rotations.
An interesting question for further research is whether different measurements in the family inject, e.g., different amounts of non-stabilizerness.
We expect this to be the case, given that the (in the case of qubits) the localization cost is directly connected to the $T$-gate count \cite{Speelman2016}.

\paragraph*{Discussion.---}
We have identified that there exists an uncountable infinity of two-qutrit maximally entangled measurements that are physically distinct.
This is in stark contrast with the case of two qubits, where all such measurements can be mapped via local unitaries to the Bell state measurement.

The maximally entangled measurements are constructed from SICs, and thus the existence of this continuous family of measurements is directly connected to the existence of a continuous family of SICs in $d=3$.
Whether similar families of maximally entangled measurements exist for higher dimensions is an interesting open question.
Answering it will require, however, novel methods for constructing measurements from SICs, since the construction in Ref.~\cite{Czartowski2021} only produces maximally entangled measurements for $d<4$.

By studying the family of measurements we have identified an associated continuously infinite number of wild error bases for $U(3)$, which is the smallest possible dimension in which wild error bases can exist.
These are the first wild error bases known in this dimension.
Wild error bases are of interest in quantum error correction, and the present construction gives a concrete connection between SICs, maximally entangled measurements, and unitary error bases.

The differences between the bases also lead to different performances in operational tasks.
We identify this behavior in two different quantum information protocols: magic-state injection in quantum computing, and long-distance entanglement distribution via quantum repeaters.
For long-distance entanglement distribution via quantum repeaters, wild bases can require correction sets that grow with the chain length, whereas nice error bases do not.
In quantum computing, however, the measurements in the family seem to allow for magic-state injection protocols that are not available in the qubit-based framework.
It remains an interesting question whether one can build more practical protocols for, e.g., multipartite randomness generation or secret sharing, where using these maximally entangled measurements is more beneficial than using Bell state measurements.

A concrete promising area where these measurements may find interesting applications is that of nonlocality in networks, where non-Clifford resources are necessary to go beyond stabilizer-state protocols: networks composed of stabilizer states (in particular, maximally entangled states) and Clifford measurements (such as Bell state measurements) cannot generate nonlocal correlations \cite{Gatto_Lamas_2023}.
At the same time, practical protocols favor measurements whose localization complexity is not too high.
The family identified here provides maximally entangled measurements that are non-Clifford, while some members still appear at low levels of the Clifford hierarchy.
This makes them promising resources for networks that distribute maximally entangled states.
However, it must be noted that in many networks, when using exclusively maximally entangled states and any type of maximally entangled measurements (thus including those in the family considered) the resulting distributions admit a local model, see Appendix~\ref{app:trianglelocal}.

More broadly, our results emphasize how little is understood about the geometry and entanglement of measurements, as opposed to that of states.
Whereas the structure of bipartite maximally entangled states is completely settled by local unitary equivalence, the space of maximally entangled measurements already displays a continuum of inequivalent classes in the smallest nontrivial dimension.
That this landscape is governed by SICs, objects that sit at the heart of some of the deepest open problems in quantum geometry \cite{Renes2004,Zhu2010}, highlight the richness of this structure, of which the family identified here is likely only a glimpse.

\paragraph*{Acknowledgments.---}
We thank Ingemar Bengtsson for helpful discussions.
This work is supported by the Swiss National Science Foundation (grant number 224561 and NCCR-SwissMAP), and the Swedish Research Council under Contract No. 2023-03498.

\bibliography{references.bib}

\appendix
\onecolumngrid

\section{Bases of maximally entangled states}
\label{app:SICmeas}
In this appendix we derive the conditions under which the construction of Ref.~\cite{Czartowski2021} produces bases of maximally entangled states.
Using the defining property of SICs, $\left|\braket{\psi_i}{\psi_j}\right|^2=\frac{1+d\delta_{ij}}{1+d}$, and that $\frac{1}{\sqrt{d}}\sum_{j=1}^{d}\ket{j,j}=\frac{1}{d\sqrt{d}}\sum_{j=1}^{d^2}\ket{\psi_j,\psi_j^*}$ for any SIC, it is possible to see that the single-party reduced states of the eigenstates of the measurement basis, $\{\ket{\Phi_i}\}$, are
\begin{equation}
  \rho_i^A=\left[1-f_\pm(\alpha)^2\right]\ket{\psi_i}\bra{\psi_i}+f_\pm(\alpha)^2\frac{\mathbb{1}}{d},\qquad\rho_i^B=\left[1-f_\pm(\alpha)^2\right]\ket{\psi^*_i}\bra{\psi^*_i}+f_\pm(\alpha)^2\frac{\mathbb{1}}{d}.
\end{equation}
From here, the purity of the states is easy to compute:
\begin{equation}
  \Tr\left[\left(\rho_i^A\right)^2\right]=\Tr\left[\left(\rho_i^B\right)^2\right]=1-\left(1-\frac{1}{d}\right)\left[2-f_\pm(\alpha)^2\right]f_\pm(\alpha)^2.
\end{equation}

Let us denote this quantity by $\mathcal{P}$.
Note that the purity is independent of the particular SIC used to construct the basis.
In the case of pure bipartite states, $\mathcal{P}$ is a measure of entanglement and achieves its minimum value, $1/d$, for maximally entangled states.

In order to find the smallest value of $\mathcal{P}$, we look at the zeros of $\dv{\mathcal{P}}{\alpha}$.
\begin{equation}
  \begin{aligned}
    \dv{\mathcal{P}}{\alpha}=&-\left(1-\frac{1}{d}\right)\left\{-2 f_\pm(\alpha) f_\pm(\alpha)^2 + \left[2 - f_\pm(\alpha)^2\right] 2 f_\pm(\alpha)\right\} \dv{f_\pm(\alpha)}{\alpha}\\
    =&-4\left(1-\frac{1}{d}\right)f_\pm(\alpha)\left[1-f_\pm(\alpha)^2\right]\dv{f_\pm(\alpha)}{\alpha}.
  \end{aligned}
\end{equation}
This function is zero at several points: wherever $f_\pm(\alpha)=0$ (giving $\mathcal{P}=1$, i.e., product states), wherever $f_\pm(\alpha)^2=1$ (giving $\mathcal{P}=1/d$, i.e., maximally entangled states), and at the points where $\dv{f_\pm(\alpha)}{\alpha}=0$.

Using the definition of $f_\pm(\alpha)$, the solution to $f_\pm(\alpha)^2=1$ is $\cos\alpha=\pm\frac{1}{2}\sqrt{d+1}$ (respectively for $f_\pm$).
Note that $\frac{1}{2}\sqrt{d+1}$ is in $[-1,1]$ (so that $\cos\alpha$ is well defined) for $d\leq 3$ only.
Thus, this construction can only generate maximally entangled bases for $d=2$ and $d=3$.

For completeness we need to check the extreme points of $f_\pm(\alpha)$.
The maximum of $f_+(\alpha)$ occurs at $\alpha=0$, and its minimum occurs at $\alpha=\pi-\arccos\sqrt{\frac{d}{d+1}}\equiv\pi-\alpha_c$.
Analogously, the maximum of $f_-(\alpha)$ occurs at $\alpha=\alpha_c$, and its minimum occurs at $\alpha=\pi$.
In these points, the purity is
\begin{align*}
  \mathcal{P}_+(0)=\mathcal{P}_-(\pi) &= 1-\frac{2}{d}-\frac{1}{d^2}+\frac{11}{d^3}-\frac{8}{d^5}-4\frac{\sqrt{d+1}}{d^2}\left(1-\frac{2}{d}-\frac{1}{d^2}+\frac{2}{d^3}\right),\\
  \mathcal{P}_+(\pi-\alpha_c)=\mathcal{P}_-(\alpha_c)&=1-\frac{2}{d}+\frac{3}{d^2}-\frac{1}{d^3}.
\end{align*}
None of these expressions is equal to $1/d$.
In summary, the SIC construction of isoentangled bases in Ref.~\cite{Czartowski2021} only produces bases of maximally entangled states for $d=2,3$, and it does so when the parameter $\alpha$ satisfies $\cos\alpha=\pm\frac{1}{2}\sqrt{d+1}$.

\section{Explicit form of the measurements}
\label{app:M}
In this appendix we give the explicit expression of the eigenstates of the family of measurements.
We will do so by giving the unitary matrix that realizes the change of basis from the computational basis to the corresponding measurement basis, i.e., $\mathcal{M}_\varphi=\sum_{j=1}^9\ketbra*{\Phi_j^{(\varphi)}}{j}$.
This matrix contains the measurement eigenstates as columns.
We call this matrix the \textit{measurement unitary}.

Inserting Eq.~\eqref{eq:family_SIC} into Eq.~\eqref{eq:family} and setting $\alpha=0$ we find that the corresponding measurement unitary is
\begin{equation}
  \M_\varphi\,{=}\,\frac{1}{\sqrt{3}}\begin{pmatrix}
    0 & 0 & 0 & 1 & 1 & 1 & 0 & 0 & 0 \\
    z^* & z^*\omega^2 & z^*\omega & 0 & 0 & 0 & 0 & 0 & 0 \\
    0 & 0 & 0 & 0 & 0 & 0 & z & z\omega & z\omega^2 \\
    z & z\omega & z\omega^2 & 0 & 0 & 0 & 0 & 0 & 0 \\
    0 & 0 & 0 & 0 & 0 & 0 & 1 & 1 & 1 \\
    0 & 0 & 0 & z^* & z^*\omega^2 & z^*\omega & 0 & 0 & 0 \\
    0 & 0 & 0 & 0 & 0 & 0 & z^* & z^*\omega^2 & z^*\omega \\
    0 & 0 & 0 & z & z\omega & z\omega^2 & 0 & 0 & 0 \\
    1 & 1 & 1 & 0 & 0 & 0 & 0 & 0 & 0
  \end{pmatrix},
  \label{eq:familyStates}
\end{equation}
where $z=e^{\ii\varphi}$.

Applying the single-qutrit unitary $W_1=e^{-\ii\varphi}\ketbra{0}{1} + e^{\ii\varphi}\ketbra{1}{0} + \ketbra{2}{2}$ in the first party one arrives at a simpler notation where the eigenstates have the same support as the Bell states:
\begin{equation}
  (W_1\otimes\mathbb{1})\cdot\M_\varphi\,{=}\,\frac{1}{\sqrt{3}}\begin{pmatrix}
    1 & \omega & \omega^2 & 0 & 0 & 0 & 0 & 0 & 0 \\
    0 & 0 & 0 & 0 & 0 & 0 & z^* & z^* & z^* \\
    0 & 0 & 0 & z^{-2} & \omega^2 z^{-2} & \omega z^{-2} & 0 & 0 & 0 \\
    0 & 0 & 0 & z & z & z & 0 & 0 & 0 \\
    1 & \omega^2 & \omega & 0 & 0 & 0 & 0 & 0 & 0 \\
    0 & 0 & 0 & 0 & 0 & 0 & z^2 & \omega z^2 & \omega^2 z^2 \\
    0 & 0 & 0 & 0 & 0 & 0 & z^* & z^*\omega^2 & z^*\omega \\
    0 & 0 & 0 & z & z\omega & z\omega^2 & 0 & 0 & 0 \\
    1 & 1 & 1 & 0 & 0 & 0 & 0 & 0 & 0
  \end{pmatrix}.
  \label{eq:familyStates2}
\end{equation}
Note that, in fact, the three first states turn into the three Bell states $\left\{Z^a\cdot\frac{1}{\sqrt{3}}\left(\ket{00}+\ket{11}+\ket{22}\right)\right\}_{a=0,1,2}$.
In this case, the single-qutrit unitary error basis that generates the measurement is still $\mathcal{U}_\varphi$ in Eq.~\eqref{eq:familyU}.

In fact, it is possible to apply a further local rotation that gives a form of the measurement basis with six Bell states.
In Eq.~\eqref{eq:familyStates2}, applying any tensor-product unitary $W_2\otimes W_2^*$ does not modify the first three eigenstates.
Choosing $W_2=e^{-\ii\varphi}\ketbra{0}{1}+e^{-2\ii\varphi}\ketbra{1}{2}+\ketbra{2}{0}$ we obtain
\begin{equation}
  \left[(W_2\cdot W_1)\otimes W_2^*\right]\cdot\M_\varphi\,{=}\,\frac{1}{\sqrt{3}}\begin{pmatrix}
    1 & \omega^2 & \omega & 0 & 0 & 0 & 0 & 0 & 0 \\
    0 & 0 & 0 & 0 & 0 & 0 & z^3 & \omega z^3 & \omega^2 z^3 \\
    0 & 0 & 0 & 1 & 1 & 1 & 0 & 0 & 0 \\
    0 & 0 & 0 & 1 & \omega & \omega^2 & 0 & 0 & 0 \\
    1 & 1 & 1 & 0 & 0 & 0 & 0 & 0 & 0 \\
    0 & 0 & 0 & 0 & 0 & 0 & z^{-3} & \omega^2 z^{-3} & \omega z^{-3} \\
    0 & 0 & 0 & 0 & 0 & 0 & 1 & 1 & 1 \\
    0 & 0 & 0 & 1 & \omega^2 & \omega & 0 & 0 & 0 \\
    1 & \omega & \omega^2 & 0 & 0 & 0 & 0 & 0 & 0
  \end{pmatrix}.
  \label{eq:familyStates3}
\end{equation}
Interestingly, in this case the set of generating unitaries is not exactly of the form of Eq.~\eqref{eq:familyU}, but of a unitary-equivalent of it, namely $\{\mathbb{1}, Z, Z^2, X, X\cdot Z, X\cdot Z^2, T_\varphi\cdot X^2, T_\varphi\cdot X^2\cdot Z, T_\varphi\cdot X^2\cdot Z^2\}$, with $T_\varphi=\ketbra{0}+e^{3\ii\varphi}\ketbra{1}+e^{-3\ii\varphi}\ketbra{2}$.

\section{Full characterization of equivalent non-maximally entangled measurements}
\label{app:equivSIC}
In this appendix we analyze the relation between the equivalence of SICs and the equivalence of the measurement bases they generate via Eq.~\eqref{eq:family}.
Let us begin assuming that we have two SICs, $\{\ket{\psi_i}\}$ and $\{\ket{\phi_i}\}$, that are equivalent, i.e., for which there exist a unitary $U$, a permutation $\pi$, and phases $\theta_i$ such that $\ket{\psi_i}=e^{\ii\theta_i}U\ket{\phi_{\pi(i)}}$.
For these SICs, the corresponding bases, $\{\ket{\Phi_i}\}$ and $\{\ket{\Psi_i}\}$, satisfy
\begin{equation}
  \begin{aligned}
    \ket{\Psi_i}&=c\ket{\psi_i,\psi_i^*}-f(\alpha)e^{\ii\alpha}\ket{\Omega^+}\\
    &=\sqrt{1+\frac{1}{d}}\,U\otimes U^*\ket{\phi_{\pi(i)},\phi_{\pi(i)}^*}-f(\alpha)e^{\ii\alpha}\ket{\Omega^+}\\
    &=\sqrt{1+\frac{1}{d}}\,U\otimes U^*\ket{\phi_{\pi(i)},\phi_{\pi(i)}^*}-f(\alpha)e^{\ii\alpha}U\otimes U^*\ket{\Omega^+}\\
    &=U\otimes U^*\left[\sqrt{1+\frac{1}{d}}\ket{\phi_{\pi(i)},\phi_{\pi(i)}^*}-f(\alpha)e^{\ii\alpha}\ket{\Omega^+}\right]\\
    &=U\otimes U^*\ket{\Phi_{\pi(i)}},
  \end{aligned}
\end{equation}
where, in the third line, we have used $A\otimes B\ket{\Omega^+}=A\cdot B^T\otimes\mathbb{1}\ket{\Omega^+}$ to replace $U\cdot U^\dagger\otimes\mathbb{1}\ket{\Omega^+}$ by $U\otimes(U^\dagger)^T\ket{\Omega^+}=U\otimes U^*\ket{\Omega^+}$.

Now, we study the converse.
Let us assume that two bases $\{\ket{\Phi_i}\}$ and $\{\ket{\Psi_i}\}$ are equivalent, i.e., that there exist local unitaries $A$ and $B$, a permutation $\pi$, and phases $\theta_i$, such that $\ket{\Psi_i}=e^{\ii\theta_i}A\otimes B\ket{\Phi_{\pi(i)}}$.
Then, we have that
\begin{equation}
  \begin{aligned}
    \ket{\Psi_i}&=\sqrt{1+\frac{1}{d}}\ket{\psi_i,\psi_i^*}-f(\alpha)e^{\ii\alpha}\ket{\Omega^+}\\
    &=e^{\ii\theta_i}A\otimes B\left[\sqrt{1+\frac{1}{d}}\ket{\phi_{\pi(i)},\phi_{\pi(i)}^*}-f(\alpha)e^{\ii\alpha}\ket{\Omega^+}\right]\\
    &=e^{\ii\theta_i}\sqrt{1+\frac{1}{d}}\, A\otimes B\ket{\phi_{\pi(i)},\phi_{\pi(i)}^*}-f(\alpha)e^{\ii\theta_i}e^{\ii\alpha}A\otimes B\ket{\Omega^+}\\
    &=e^{\ii\theta_i}\sqrt{1+\frac{1}{d}}\, A\otimes B\ket{\phi_{\pi(i)},\phi_{\pi(i)}^*}-f(\alpha)e^{\ii\theta_i}e^{\ii\alpha}\mathbb{1}\otimes B\cdot A^T\ket{\Omega^+}.
  \end{aligned}
\end{equation}
Taking the partial trace over the second subsystem, we find that the marginal states satisfy
\begin{equation}
    \rho_i^A=\left[1-f(\alpha)^2\right]\ket{\psi_i}\bra{\psi_i}+f(\alpha)^2\frac{\mathbb{1}}{d}=\left[1-f(\alpha)^2\right]A\ket{\phi_{\pi(i)}}\bra{\phi_{\pi(i)}}A^\dagger+f(\alpha)^2\frac{\mathbb{1}}{d}.
    \label{eq:marginalNME}
\end{equation}
Note that, for $f(\alpha)^2\not=1$, both expressions are rank-1 perturbations of the maximally mixed state.
Therefore, Eq.~\eqref{eq:marginalNME} implies that $\ket{\psi_i}=e^{\ii\tau_i}A\ket{\phi_{\pi(i)}}$ for some phases $\tau_i$.

These two statements imply that, for $f(\alpha)^2\not=1$, the equivalence of SICs is a necessary and sufficient condition for the equivalence of the corresponding measurement bases.
Importantly, bases composed of maximally entangled states have $f(\alpha)^2=1$ (the single-party marginals are maximally mixed), so the reasoning after Eq.~\eqref{eq:marginalNME} does not apply, and thus, in principle, non-equivalent SICs could give rise to equivalent measurements.

\section{Equivalence between maximally entangled measurements}
\label{app:equiv}

When the bases are composed of maximally entangled states, the marginal-state argument of Appendix~\ref{app:equivSIC} no longer proves necessity.
Instead, we use the standard correspondence between maximally entangled bases and unitary error bases.

We begin by stating a known lemma:
\begin{lemma}[\cite{Werner2001}]\label{lem:meb-ueb-equivalence}
    Let
    \[
        \ket{\Phi_i}=(U_i\otimes\mathbb{1})\ket{\Omega^+},
        \qquad
        \ket{\Psi_i}=(V_i\otimes\mathbb{1})\ket{\Omega^+},
        \qquad
        \ket{\Omega^+}=\frac{1}{\sqrt 3}\sum_{j=0}^2\ket{j,j}.
    \]
    Then the maximally entangled bases $\{\ket{\Phi_i}\}$ and $\{\ket{\Psi_i}\}$ are locally unitarily equivalent if and only if the associated unitary error bases are equivalent, i.e., if and only if there exist unitaries $A,B$, phases $e^{\ii\theta_i}$, and a permutation $\pi$ such that
    \[
        U_i=e^{\ii\theta_i}A\cdot V_{\pi(i)}\cdot B
        \qquad\text{for every }i.
    \]
\end{lemma}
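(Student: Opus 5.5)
The plan is to prove both directions by reducing everything to the vectorization identity $(C\otimes D)\ket{\Omega^+}=(C\cdot D^T\otimes\mathbb{1})\ket{\Omega^+}$, valid for any $3\times3$ matrices $C,D$, together with the fact that $\ket{\Omega^+}$ is, up to local unitaries, the only maximally entangled state. Since everything is linear in $C$ and $D$, this reduces the question to how unitaries act on a single copy of $\ket{\Omega^+}$.

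\emph{Sufficiency.} Suppose $U_i=e^{\ii\theta_i}A\cdot V_{\pi(i)}\cdot B$. Then
\[
\ket{\Phi_i}=e^{\ii\theta_i}(A\cdot V_{\pi(i)}\cdot B\otimes\mathbb{1})\ket{\Omega^+}.
\]
Next, rewrite $(B\otimes\mathbb{1})\ket{\Omega^+}=(\mathbb{1}\otimes B^T)\ket{\Omega^+}$. Because operators on the two tensor factors commute, this gives
\[
\ket{\Phi_i}=e^{\ii\theta_i}(A\otimes B^T)\ket{\Psi_{\pi(i)}}.
\]
Here $B^T$ is unitary, so this is exactly local unitary equivalence of the two bases, with the same phases and permutation.

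\emph{Necessity.} Assume $\ket{\Phi_i}=e^{\ii\theta_i}(A\otimes C)\ket{\Psi_{\pi(i)}}$ for unitaries $A,C$. Substituting the definitions gives
\[
(U_i\otimes\mathbb{1})\ket{\Omega^+}=e^{\ii\theta_i}(A\cdot V_{\pi(i)}\otimes C)\ket{\Omega^+}=e^{\ii\theta_i}(A\cdot V_{\pi(i)}\cdot C^T\otimes\mathbb{1})\ket{\Omega^+}.
\]
The map $M\mapsto(M\otimes\mathbb{1})\ket{\Omega^+}$ is injective, since it is $1/\sqrt3$ times the vectorization of $M$. Hence $U_i=e^{\ii\theta_i}A\cdot V_{\pi(i)}\cdot B$ with $B=C^T$, which is unitary.

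\emph{Expected obstacle.} The main subtlety is the convention for local unitary equivalence of bases. If it is stated with the ordering of the two parties fixed, the argument above is complete. If it also allows swapping the subsystems, one extra step is needed. The swap sends $(M\otimes\mathbb{1})\ket{\Omega^+}$ to $(\mathbb{1}\otimes M)\ket{\Omega^+}=(M^T\otimes\mathbb{1})\ket{\Omega^+}$, so it corresponds to transposing all the $V_i$. Under the paper's convention, with the roles of $A\otimes B$ as written, this extension is not needed. The only remaining check is injectivity of the vectorization map, which is immediate from $\bra{jj'}(M\otimes\mathbb{1})\ket{\Omega^+}=M_{jj'}/\sqrt3$.
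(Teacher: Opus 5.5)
Your proof is correct and is essentially the paper's argument: the vectorization identity $(C\otimes D)\ket{\Omega^+}=(C\cdot D^T\otimes\mathbb{1})\ket{\Omega^+}$, injectivity of $M\mapsto(M\otimes\mathbb{1})\ket{\Omega^+}$, and unitarity of $B=C^T$. You also write out the converse direction that the paper dismisses as ``the same argument in reverse''; the uniqueness of $\ket{\Omega^+}$ up to local unitaries that you mention at the start is never actually used, so you can drop it.
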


\begin{proof}
    If $\ket{\Phi_i}=e^{\ii\theta_i}(A\otimes B)\ket{\Psi_{\pi(i)}}$, then
    \[
        (U_i\otimes\mathbb{1})\ket{\Omega^+}
        =
        e^{\ii\theta_i}(A\cdot V_{\pi(i)}\cdot B^T\otimes\mathbb{1})\ket{\Omega^+},
    \]
    where we used $(C\otimes D)\ket{\Omega^+}=(C\cdot D^T\otimes\mathbb{1})\ket{\Omega^+}$.
    The vectorization map is injective, so $U_i=e^{\ii\theta_i}A\cdot V_{\pi(i)}\cdot B^T$.
    Since $B^T$ is unitary, this is unitary-error-basis equivalence.
    The converse is the same argument in reverse.
\end{proof}

Now we can proceed to prove the result.
\renewcommand{\thetheorem}{1}
\begin{theorem}\label{thm:app-equiv}
    Two measurement bases of the form of Eq.~\eqref{eq:familyM} for phases $\varphi_1$ and $\varphi_2$ are equivalent under local unitary operations if and only if $\varphi_1\pm\varphi_2=\frac{2\pi k}{9}$ for some $k\in\mathbb Z$.
\end{theorem}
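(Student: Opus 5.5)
By Lemma~\ref{lem:meb-ueb-equivalence}, the statement reduces to classifying the unitary error bases $\mathcal U_\varphi$ of Eq.~\eqref{eq:familyU} up to the left-right equivalence \eqref{eq:equiv}. Throughout, I write $X$ for the standard shift, so that $X_+^{(\varphi)}$ is a cyclic shift by $-1$ and $X_-^{(\varphi)}$ a cyclic shift by $+1$, each dressed with diagonal phases. The two directions will be handled separately: necessity through a numerical invariant, and sufficiency through explicit equivalences.

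\emph{Necessity.} I would use the quantity $I(\mathcal U)$ of Eq.~\eqref{eq:invariant}. First I check that it is invariant under \eqref{eq:equiv}. Under $U_i\mapsto e^{\ii\theta_i}A U_{\pi(i)}B$, every product $U_i^\dagger U_jU_k^\dagger U_l$ becomes $B^\dagger(\cdots)B$ times a phase. The $A$'s cancel pairwise, the phases disappear under $|\cdot|$, the trace is cyclic, and the permutation only relabels the summation indices.

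I then compute $I(\mathcal U_\varphi)$. Each element has the form $Z^a S$ with $S\in\{\mathbb 1, X_+^{(\varphi)}, X_-^{(\varphi)}\}$, so every fourfold product is a diagonal phase matrix times $X^s$, where $s$ is the net shift. Its trace vanishes unless $s\equiv 0 \pmod 3$. The surviving terms are therefore diagonal, and their entries are powers of $\omega$ times powers of $e^{\pm3\ii\varphi}$ coming from the corner entries of $X_\pm^{(\varphi)}$. I would organise the sum according to the sequence of shift types $(\epsilon_i,\epsilon_j,\epsilon_k,\epsilon_l)\in\{0,+,-\}^4$ with zero net shift, and use the Weyl-type relations $ZX_\pm^{(\varphi)}=\omega^{\mp1}X_\pm^{(\varphi)}Z$ to move all $Z$ factors to one side. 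The remaining sums over the $Z$-exponents are then character sums of $\mathbb Z_3$.

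Taking fourth powers of the resulting moduli produces harmonics in $3\varphi$. I expect the lower harmonics to cancel after summing over the $Z$-labels, leaving $I(\mathcal U_\varphi)=51273+7776\cos(9\varphi)$. Consequently, equivalence forces $\cos 9\varphi_1=\cos9\varphi_2$, which is exactly $\varphi_1\pm\varphi_2\in\frac{2\pi}{9}\mathbb Z$. This bookkeeping, and in particular verifying that only the $\cos 9\varphi$ harmonic survives, is the main obstacle. I would organise it by a case table over the shift patterns, and cross-check the constant term at $\varphi=0$, where the basis is the Weyl--Heisenberg basis and $I$ can be evaluated directly.

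\emph{Sufficiency.} It suffices to realise the two generators $\varphi\mapsto\varphi+\frac{2\pi}{9}$ and $\varphi\mapsto-\varphi$; the latter, composed with translations, gives $\varphi\mapsto\frac{2\pi}{9}k-\varphi$.

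For the translation, I would conjugate by a diagonal unitary $D=\mathrm{diag}(1,e^{\ii\theta_1},e^{\ii\theta_2})$, which fixes every $Z^a$. I would then multiply on the left by suitable powers of $Z$ and by scalars, and choose the $\theta$'s so that $DX_+^{(\varphi)}D^\dagger$ and $DX_-^{(\varphi)}D^\dagger$ become, up to $Z^a$ and phases, $X_\pm^{(\varphi+2\pi/9)}$. Conjugation by a diagonal matrix preserves the product of the three nonzero entries, so it cannot change that product, which is the cube of the basic phase. Hence the $Z$-multiplication, which contributes factors of $\omega$, is essential. Solving the resulting linear congruences for the phases should yield the shift $3\varphi\mapsto 3\varphi+\frac{2\pi}{3}$, i.e.\ $\varphi\mapsto\varphi+\frac{2\pi}{9}$.

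For the reflection, I would conjugate by the permutation $P:\ket j\mapsto\ket{-j}$. This inverts $Z$ and exchanges the two shift directions, sending $X_+^{(\varphi)}$ to a $+1$ shift whose corner phase is $e^{3\ii\varphi}$, i.e.\ a matrix diagonally similar to $X_-^{(-\varphi)}$. A final diagonal conjugation then maps $\mathcal U_\varphi$ onto $\mathcal U_{-\varphi}$ as a set.
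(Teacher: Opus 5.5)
Your necessity half is the paper's argument, and it works. Sorting the $9^4$ traces by zero-net-shift pattern and then doing the $\mathbb Z_3$ character sums does reproduce $51273+7776\cos 9\varphi$: the nonzero traces have modulus $3$ or $|1+2\cos(3\varphi+2\pi c/3)|$, and summing the fourth powers over $c$ kills every harmonic except $\cos 9\varphi$. Your translation step also leads to the paper's $D=\operatorname{diag}(1,e^{2\pi\ii/9},e^{4\pi\ii/9})$.

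\textbf{The reflection step fails.} Conjugation by $P:\ket{j}\mapsto\ket{-j}$ followed by a diagonal conjugation cannot map $\mathcal U_\varphi$ onto $\mathcal U_{-\varphi}$, or onto any $\mathcal U_\psi$, unless $e^{9\ii\varphi}=1$. It is true that $PX_+^{(\varphi)}P$ alone is diagonally similar to $X_-^{(-\varphi)}$, and $PX_-^{(\varphi)}P$ alone to $X_+^{(-\varphi)}$. But no single diagonal unitary achieves both. Set $q=e^{3\ii\varphi}$. Diagonal conjugation fixes $(PX_+^{(\varphi)}P)(PX_-^{(\varphi)}P)=P\operatorname{diag}(1,\bar q,q)P=\operatorname{diag}(1,q,\bar q)$. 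So in the image, every product of a shift-$(+1)$ element with a shift-$(-1)$ element is a phase times $Z^c\operatorname{diag}(1,q,\bar q)$. In $\mathcal U_\psi$ the same products are phases times $Z^bX_-^{(\psi)}X_+^{(\psi)}=Z^b\operatorname{diag}(q',1,\bar q')$, with $q'=e^{3\ii\psi}$. Equality up to a phase needs $\omega^b\bar q'=q$ and $\omega^{2b}(\bar q')^2=\bar q$, which forces $q^3=1$. A diagonal left--right pair $A\cdot B$ does no better: $AB\propto Z^c$ forces $B\propto A^\dagger Z^c$, which only adds a left factor $Z^c$. The two diagonals differ by a cyclic shift of their entries, and nothing diagonal (nor any $Z$ factor) can produce such a shift.

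\textbf{The fix} is to use the other transposition, $\ket0\leftrightarrow\ket1$ (i.e.\ $j\mapsto 1-j$, which is $XP$). The extra shift turns the invariant diagonal into $\operatorname{diag}(\bar q,1,q)$, which is compatible with $\psi=2\pi b/9-\varphi$. Dressing it with phases gives the paper's $R=e^{2\pi\ii/9}\ketbra{1}{0}+\ketbra{0}{1}+e^{-2\pi\ii/9}\ketbra{2}{2}$. It satisfies $RX_+^{(\varphi)}R^\dagger=e^{2\pi\ii/9}X_-^{(2\pi/9-\varphi)}$ and $RX_-^{(\varphi)}R^\dagger=e^{-2\pi\ii/9}ZX_+^{(2\pi/9-\varphi)}$. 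Composing with $D^\dagger$ then gives $\varphi\mapsto-\varphi$, if you prefer that generator.

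\textbf{A smaller point in your translation step.} Your stated reason for needing $Z$ is wrong. Left multiplication by $Z^a$ never changes the product of the three nonzero entries, since it contributes $\omega^{0}\omega^{a}\omega^{2a}=1$. The change $q\mapsto\omega q$ is instead absorbed by the free scalar: $DX_+^{(\varphi)}D^\dagger=e^{-2\pi\ii/9}X_+^{(\varphi+2\pi/9)}$, with $(e^{-2\pi\ii/9})^3=\omega^{-1}$. The $Z$ factor is needed only so that the images of $X_+$ and $X_-$ fit simultaneously, as in $DX_-^{(\varphi)}D^\dagger\propto ZX_-^{(\varphi+2\pi/9)}$.
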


\begin{proof}
For any unitary error basis $\mathcal U=\{U_i\}_{i=1}^{9}$, define
\[
    I(\mathcal U)
    :=
    \sum_{i,j,k,l=1}^{9}
    \left|
        \Tr\!\left(U_i^\dagger U_j U_k^\dagger U_l\right)
    \right|^4 .
\]
This quantity is invariant under the equivalence relation for unitary error bases in Eq.~\eqref{eq:equiv}.
Indeed, if $U_i=e^{\ii\theta_i}A\cdot V_{\pi(i)}\cdot B$, then
\[
    U_i^\dagger U_j U_k^\dagger U_l
    =
    e^{\ii(-\theta_i+\theta_j-\theta_k+\theta_l)}
    B^\dagger
    V_{\pi(i)}^\dagger V_{\pi(j)}
    V_{\pi(k)}^\dagger V_{\pi(l)}
    B .
\]
Taking the trace and then the absolute value removes, respectively, the conjugation by $B$ and the phase factor.
Moreover, the sum over all elements is unaffected by the permutation $\pi$.

Set $q=e^{3\ii\varphi}$.
For $\mathcal U_\varphi$, an exact symbolic enumeration of the $9^4$ traces gives the following table
\[
\begin{array}{c|ccccc}
    |T_{ijkl}|^4 & 0 & 3^4 & |P_0(q)|^4 & |P_1(q)|^4 & |P_2(q)|^4 \\ \hline
    \text{Number of occurrences} & 5184 & 405 & 324 & 324 & 324
\end{array},
\]
where $T_{ijkl}:=\Tr(U_i^\dagger U_j U_k^\dagger U_l)$ and $P_r(q)=1+\omega^r q+\omega^{2r}q^2$.
Therefore
\[
    I(\mathcal U_\varphi)
    =
    405\cdot 3^4
    +
    324\sum_{r=0}^2 |P_r(q)|^4 
    = 32805 + 18438 + 3888(q^3+q^{-3})
    = 51273+7776\cos(9\varphi).
\]
If the bases for $\varphi_1$ and for $\varphi_2$ are locally unitarily equivalent, Lemma~\ref{lem:meb-ueb-equivalence} implies that $\mathcal U_{\varphi_1}$ and $\mathcal U_{\varphi_2}$ are equivalent.
Hence $\cos(9\varphi_1)=\cos(9\varphi_2)$, which for real $\varphi_1$ and $\varphi_2$ is equivalent to $\varphi_1\pm\varphi_2=2\pi k/9$, with integer $k$.

It remains to prove sufficiency.
The diagonal unitary $D=\operatorname{diag}(1,e^{\ii\frac{2\pi}{9}},e^{\ii\frac{4\pi}{9}})$ satisfies, up to global phases,
\[
    D Z^a D^\dagger=Z^a,\qquad
    D X_+^{(\varphi)}D^\dagger\sim X_+^{(\varphi+\frac{2\pi}{9})},\qquad
    D X_-^{(\varphi)}D^\dagger\sim Z X_-^{(\varphi+\frac{2\pi}{9})}.
\]
Therefore, conjugation by $D$ maps $\mathcal U_\varphi$ projectively onto $\mathcal U_{\varphi+\frac{2\pi}{9}}$.
Similarly, the unitary
\[
    R=\begin{pmatrix}
        0&1&0\\
        e^{\ii\frac{2\pi}{9}}&0&0\\
        0&0&e^{-\ii\frac{2\pi}{9}}
    \end{pmatrix}
\]
satisfies, up to global phases,
\[
    R Z^a R^\dagger\sim Z^{-a},\qquad
    R X_+^{(\varphi)}R^\dagger\sim X_-^{(\frac{2\pi}{9}-\varphi)},\qquad
    R X_-^{(\varphi)}R^\dagger\sim Z X_+^{(\frac{2\pi}{9}-\varphi)}.
\]
Thus conjugation by $R$ maps $\mathcal U_\varphi$ projectively onto $\mathcal U_{\frac{2\pi}{9}-\varphi}$.
These two equivalences generate all transformations $\varphi\mapsto\pm\varphi+2\pi n/9$, completing the proof.
\end{proof}

\section{The Clifford hierarchy}\label{App:Clifford}
In this section, we summarize some properties of the Clifford hierarchy for qudits \cite{cui2017}.
The Clifford hierarchy is a recursive algebraic hierarchy that organizes unitaries according to their action on the generalized Pauli group.
Beyond its relevance in quantum computing \cite{gottesman1998}, it has found applications in the context of quantum position verification \cite{Chakraborty2015} and nonlocal computation \cite{Gatto_Lamas_2023} because it is associated to the nonlocal complexity of unitary operations.

The Clifford hierarchy is formulated as a collection of nested subsets $\mathcal{C}^{(1)} \subset \mathcal{C}^{(2)} \subset \mathcal{C}^{(3)} \subset \dots \subset \mathcal U(d^n)$ of unitary operators acting on $n$ qudits \cite{Gottesman_1999}.
Specifically, the hierarchy is defined recursively as follows: Let the first level $ \mathcal{C}_1$ correspond to the generalized Pauli group $\mathcal P$, which is defined according to
\begin{equation}
    \mathcal{C}_1\equiv \mathcal P \equiv \{ D_{i_1j_1}\otimes \dots \otimes  D_{i_n j_n}\},
\end{equation}
where $D_{i_k j_k} = \omega^{i_k j_k} X^{i_k} Z^{j_k}$ is a Weyl-Heisenberg operator and $(i_k,j_k) \in \{ 0,\dots,d-1\}^2$ for all $k = 1,\dots,n$.
Here, $\omega = e^{2\pi i/d}$, and the group generators are given by $X  =\sum_{k=0}^{d-1} \ketbra{k \oplus 1}{k}$ and $Z = \sum_{k=0}^{d-1} \omega^{k} \ketbra{k}{k}$, where $\oplus$ denotes addition modulo $d$.
The second subset $\mathcal{C}_2$, known as the Clifford group, is defined as the set of unitaries that normalize the Pauli group $\mathcal P$, namely
\begin{equation}
\mathcal{C}_2 \equiv \{U \in \mathcal{U}(d^n): U \mathcal{P} U^\dagger \in \mathcal{P}\}.
\end{equation}
For $k > 2$, each level $\mathcal{C}_k$ is defined recursively as follows
\begin{equation}
    \mathcal{C}_k \equiv \{U \in \mathcal{U}(d^n): U \mathcal{P} U^\dagger \in \mathcal{C}_{k-1}\}.
    \label{eq:Ck}
\end{equation}
In general the sets $\mathcal{C}_k$ does not form a group for $k >2$.
However, an important special case is that of diagonal unitaries.
The subset of diagonal unitaries $\mathcal C_k^d \subset\mathcal C_k$ in the $k$-th level of the Clifford hierarchy do form a group \cite{cui2017}.

\subsection{Clifford level of diagonal unitaries}\label{App:CliffordDiagonal}
In general, it is difficult to fully characterize the unitaries that belong to a certain level of the Clifford hierarchy.
However, an important special case is that of diagonal unitaries, which has been completely characterized in the case of qudits of prime dimensions \cite{cui2017}.

Consider that $D$ is a diagonal unitary acting on $n$ qudits of prime dimension $d$.
Then, the unitary can be approximated as \cite{cui2017}
\begin{equation}\label{eq:diag_unitary}
D \ket{\vec{z}} = \exp\!\left(2 \pi \ii \frac{f_m(\vec{z})}{d^m}\right) \ket{\vec{z}},
\end{equation}
where $m \in \mathbb{N}$ denotes the precision, and $\ket{\vec{z}} = \ket{z_1,\dots,z_n}$ is an $n$-qudit computational basis state with each $z_j \in \{0,\dots,d-1\}$.
The function $f_m(\vec{z})$ is an integer-valued polynomial taking values in $\{0,1,\dots,d^m-1\}$, that is,
\[
f_m : \vec{z} \mapsto p, \qquad p \in \{0,1,\dots,d^m-1\}.
\]

We now describe how to construct the so-called \textit{phase polynomial} $f_m$ \cite{cui2017}.
To this end, define the parameter list $L = \{1,z_1,\dots z_n\}$.
Over $L$, construct a basis, $\mathcal{S}$, of monomials of degree at most $n(d-1)$.
Then, any phase polynomial $f_m$ can be expanded as
\begin{equation}
f_m(\vec z)=\sum_{\tilde z\in \mathcal S} a_{\tilde z}\,\tilde z \pmod{d^m},
\end{equation}
where the coefficients $a_{\tilde z}$ belong to $\mathbb Z_{d^m}$.
Since the constant monomial $1$ contributes only an overall global phase to $D$ in Eq.~\eqref{eq:diag_unitary}, it may be omitted.
We therefore restrict to the non-constant monomials, i.e., $\tilde{\mathcal S}=\mathcal S\setminus\{1\}$.

Importantly, it was proven that the lowest level $k$ for which $D \in \mathcal{C}_k$ can be determined through its phase polynomial $f_m(\vec{z})$, as follows \cite{cui2017}
\begin{equation}\label{eq:Clifford_level}
k = \max_{\tilde{z} \in\tilde{\mathcal{S}}} \ \big[(d-1)(m-\nu_d(a_{\tilde z})-1)+|\tilde z| \big].
\end{equation}
Here, the maximization runs over all monomials $\tilde z$ in $\tilde{\mathcal{S}}$, where $|\tilde z|$ denotes its degree, $m$ is the precision, and $a_{\tilde z}$ is the associated coefficient.
Moreover, $\nu_d(a_{\tilde z})$ is the $d$-adic valuation of $a_{\tilde z}$.
This is, $\nu_d(a_{\tilde z})$ is the exponent of the highest power of $d$ dividing $a_{\tilde z}$ such that the quotient is an integer.

\section{Localization of the measurements in the family}\label{App:localization}
We briefly summarize a teleportation-based localization protocol, relevant to bound the minimal entanglement cost needed to locally implement the measurements \cite{Pauwels2025}.
This protocol can be formulated in terms of a complexity hierarchy, the so-called \textit{Vaidman hierarchy}, and leads to conditions which are hard to solve but in principle can be checked recursively.
However, because the measurements that we consider (recall Eqs.~\eqref{eq:family} and \eqref{eq:family_SIC}) have a local tetrahedral structure, the localizability can be analysed in a much simpler way via the Clifford hierarchy.
This alternative approach, however, only provides upper bounds to the amount of entanglement required to localize the measurement through the Vaidman hierarchy.

We now describe how both approaches (checking the position of the measurement in the Vaidman hierarchy, and checking its position in the Clifford hierarchy) can be used to determine the amount of entanglement needed to simulate the measurement statistics $p(j) =  |\langle j |\M^\dagger|\psi_{AB}\rangle|^2$ for any two-qutrit state $\ket{\psi_{AB}}$.
$\M$ is the unitary matrix that rotates the computational basis into the measurement basis (see, e.g., Eq.~\eqref{eq:familyStates} in Appendix~\ref{app:M}).

\subsection{The Vaidman hierarchy with finite entanglement consumption}
To bound the minimal entanglement cost to localize a given measurement, we first consider the teleportation-based localization protocol of Ref.~\cite{Pauwels2025}.
The main idea behind this protocol is to use a finite number of rounds of blind back-and-forth teleportation without communication during the protocol, and a final broadcasting of classical communication to reconstruct the measurement statistics.

Consider two distant parties, Alice and Bob, that share the two-qutrit state $\ket{\psi_{AB}}$ and a finite number of Bell pairs.
Bob starts by teleporting his part of the state $\ket{\psi_{AB}}$ to Alice by applying a Bell state measurement on this part and his part of the first Bell pair.
This results in Alice holding $\ket{\psi_{AB}}$ up to a local unitary that depends on Bob's measurement outcome.
However, since no communication is allowed, Bob cannot convey which is the relevant correction unitary to Alice.
Instead, Alice directly applies the measurement unitary $\M^\dagger$ to the distorted state, and then teleports the resulting state back to Bob, using her shares of two additional Bell pairs.
Bob does not know which local distortions were caused by Alice's teleportation step, so instead he tries to correct his own distortion from the first round.
Bob then teleports the resulting state back to Alice using one out of $d^2$ teleportation channels, in an attempt to encode his measurement outcome from the previous round.
The parties continue the blind back-and-forth teleportation for a finite number of times, and at each step they try to correct their previous distortions and then use the teleportation channel to encode their previous outcomes.
For each step, the number of Bell pairs required to implement the protocol grows exponentially.
At the last step, one party, let us say Alice, measures her state in the computational basis instead of teleporting it back to Bob.
If the measurement statistics, up to joint deterministic post-processing, match the target distribution associated with the original measurement, the measurement can be localized with the assistance of a finite number of Bell pairs.
Notably, since each round of teleportation introduces unitary distortions, this protocol can only be terminated after a finite number of rounds if the effective unitary becomes non-distorting.

Based on the described teleportation-based localization protocol, one can systematically classify the complexity of locally performing a given measurement in terms of the number of back-and-forth teleportation steps, using the so-called \textit{Vaidman hierarchy} \cite{Pauwels2025}.
Similarly to the Clifford hierarchy, the Vaidman hierarchy is a collection of nested subsets $\mathcal{V}_1 \subsetneq \mathcal{V}_2 \subsetneq\dots \subsetneq \mathcal{U}(d^2)$ of unitary operators acting on two qudits.
The level $\mathcal V_k$ with $k\geq 1$ is defined as the set of measurement unitaries $\M$ that satisfy \cite{Pauwels2025}
\begin{equation}
    \mathcal{V}_k\equiv \{ \lvert \M^\dagger\cdot(\mathbb{1} \otimes\mathcal P)\cdot\M \rvert \in \bar{\mathcal{V}}_k\},
    \label{eq:V2}
\end{equation}
 where $\mathcal P$ denotes the generalized Pauli group and the set $\bar{\mathcal{V}}_k$ is defined recursively as follows
\begin{equation}
    \bar{\mathcal{V}}_k \equiv \{\M \in \mathcal{U}(d^2):| \M^\dagger\cdot(\mathcal P \otimes \mathcal{P})\cdot\M |\in \bar{\mathcal{V}}_{k-1}\},
    \label{eq:Vk}
\end{equation}
with $\mathcal{V}_1 = \bar{\mathcal{V}}_1 = \{P\cdot   \tilde{\Phi} \}$ being the set of permutation matrices with arbitrary phases in their entries.

Importantly, the levels of the Clifford hierarchy $\mathcal{C}_k$ form strict subsets of the corresponding $\mathcal{V}_k$, i.e., $\mathcal{C}_k \subsetneq \mathcal{V}_k$.
The reason for this is that the relations of $\mathcal{V}_k$ are strictly weaker than those of $\mathcal{C}_k$.
Indeed, all the elements of $\mathcal{C}_1=\{\mathcal{P}\otimes\mathcal{P}\}$ are permutation matrices where the entries have some fixed phases, but they do not cover all permutation matrices and all possible phases, so $\mathcal{C}_1\subsetneq\mathcal{V}_1$.
For the rest of the levels, Eq.~\eqref{eq:V2} contains only a strict subset of the conditions in Eq.~\eqref{eq:Ck}, and thus $\mathcal{C}_k\subsetneq\mathcal{V}_k$ as well.

To determine the level $k$ in the Vaidman hierarchy a measurement $\M$ belongs to, one first checks whether the measurement satisfies Eq.~\eqref{eq:V2} for $k=2$ (i.e., if all the conjugations of $\mathbb{1}\otimes(\text{Pauli matrix})$ with $\M$ are permutation matrices with phases).
If this is the case, $\mathcal{M}$ belongs to the second level in the Vaidman hierarchy.
If not, one continues and studies whether it belongs to $\mathcal V_3$, by checking whether all conditions in Eq.~\eqref{eq:Vk}.
Again, if all conditions hold, $\M \in \mathcal V_3$.
Otherwise, one continues to check the next level of the hierarchy, repeating this procedure until termination or saturation of the available resources.
Due to the exponential growth of the different combinations of distortions, from a computational point of view it is only efficient to establish whether a measurement belongs to the first few levels in the hierarchy.

We apply the procedure above to the measurement bases in Eq.~\eqref{eq:familyM} for different values of $\varphi$.
For a discrete set of values for $\varphi$, we find that the maximally entangled measurements can be localized at low levels for $\varphi \in \{0,2 \pi/27\}$, namely at $k = 2$ and $k=5$, see Table~\ref{tab:vaidman}.
Moreover, for  $\varphi \in \{2 \pi/81,4 \pi/81,8 \pi/81\}$ we can establish that its position in the Vaidman hierarchy is $k >6$, since we observe that the conditions corresponding to $k=6$ are not satisfied, and going beyond the search program becomes too large to run in reasonable time on an ordinary computer.

\begin{table}[hb!]
    \begin{tabular}{l|llllllll}
        \multicolumn{1}{l|}{Phase, $\varphi$}  & 0 \qquad \quad & $\frac{2 \pi}{81}$ \qquad \quad  & $\frac{4 \pi }{81}$\qquad \quad & $\frac{2 \pi}{27}$\qquad \quad  & $\frac{8 \pi}{81}$  \\[0.2em] \hline
         Vaidman level, $k$                               &    2  &$>6$ & $>6$ & $5$ &  $>6$  \\
    \end{tabular}
    \caption{Vaidman levels for several maximally entangled measurements in the family, using the general membership criterion.
    }
    \label{tab:vaidman}
\end{table}

\subsection{Simplifications for measurements with tetrahedral symmetry}
Given the combinatorial explosion of the previous method, in the following we describe an alternative that, exploiting the fact that the measurements we consider have tetrahedral symmetry, enables us to upper bound the Vaidman level of the measurement.

Recall that the eigenbasis of the maximally entangled measurements in the family (which we will refer to as $\M_\varphi$) can be generated via $\ket*{\Phi^{(\varphi)}_{j}} = U_{j}\otimes U_{j}^* \ket{\Phi_\varphi}$, where the set of generating unitaries $\{U_{j}\otimes U_{j}^*\}_{j}$ corresponds to the abelian group $G = \langle Z \otimes Z^*, X \otimes X\rangle$ of two-qutrit Weyl-Heisenberg operators, and $\ket{\Phi_\varphi}=-\frac{1}{\sqrt{3}}\left(\ket{00} + e^{-\ii \varphi} \ket{12} + e^{\ii \varphi}\ket{21} \right)$.
This implies that the two-qutrit fiducial state $\ket{\Phi_\varphi}$ can be prepared using following circuit \cite{pauwels2025b}:
\begin{equation}
    \ket{\Phi_\varphi} =  S\cdot(\mathbb{1} \otimes F)\cdot D_{\alpha(\varphi)}\cdot F^{\otimes 2} \ket{0}^{\otimes 2}.
    \label{eq:fid_circuit}
\end{equation}
Here, the shift operator $S$, the Fourier matrix $F$, and the diagonal phase gate $D_{\alpha(\varphi)}$ are given by
\begin{equation}
   S = \sum_{j,k=0}^{d-1} \ketbra{j \oplus k,k}{j,k}, \quad  F = \frac{1}{\sqrt{d}} \sum_{j,k = 0}^{d-1} \omega^{jk} \ketbra{j}{k}, \quad D_{\alpha(\varphi)} = \sum_{j,k=0}^{d-1} e^{\ii \alpha_{jk}^{(\varphi)}} \ketbra{j,k}{j,k},
\end{equation}
where $d = 3$, $\oplus$ denotes addition modulo $d$ in general, $\omega = e^{2 \pi i/d}$, and the phases $\alpha_{jk}^{(\varphi)}$ depend on $\varphi$.
Then, the Clifford level of the circuit defined in Eq.~\eqref{eq:fid_circuit} is characterized by the complexity of diagonal phase gate $D_{\alpha(\varphi)}$.
That is, for $k \geq 2$, the Clifford level of the circuit corresponds to the lowest level at which $D_{\alpha(\varphi)} \in \mathcal C_k$.
By extension, this implies that $\M_\varphi \in \mathcal{C}_k$ \cite{pauwels2025b}.
Importantly, since each level in the Clifford hierarchy is a subset of the corresponding level in the Vaidman hierarchy, $\M_\varphi \in \mathcal{C}_k$ implies that $\M_\varphi \in \mathcal{V}_{\leq k}$.

Moreover, using that $D_{\alpha(\varphi)}$ is a diagonal unitary, its Clifford level can be determined from its associated phase polynomial \cite{cui2017}.
Thus, we now study the phase polynomials associated with the maximally entangled measurements $\M_\varphi$.
Simplifying Eq.~\eqref{eq:fid_circuit}, we find that the fiducial state $\ket{\Phi_\varphi}$ can be directly expressed as follows:
\begin{equation}
    \ket{\Phi_\varphi} =  \frac{1}{d} \sum_{x,z=0}^{d-1} e^{\ii \alpha_{xz}^{(\varphi)}} \ket{\Psi_{xz}},
\end{equation}
where $\ket{\Psi_{xz}} = (X^{x}\cdot Z^{z} \otimes \mathbb{1}) \ket{\Omega^+}$ corresponds to the standard Bell states.
By computing the overlaps between the eigenbasis $\ket{\Psi_{xz}}$ and the fiducial state, we obtain an equation which we can solve for the phases $\alpha_{xz}^{(\varphi)}$, namely
\begin{equation}
    \alpha_{xz}^{(\varphi)} = -\ii\left[\ln(d) +\ln( \braket{\Psi_{xz}}{\Phi_\varphi}) \right].
\end{equation}
Next, we study the diagonal phase gate.
By factoring out the phase $\alpha_{00}^{(\varphi)}$ we obtain
\begin{equation}
     D_{\alpha(\varphi)} = \sum_{x,z=0}^{d-1} e^{\ii \alpha_{xz}^{(\varphi)}} \ketbra{x,z}{x,z} = e^{\ii \tilde \alpha_{00}^{(\varphi)}}\sum_{x,z=0}^{d-1}  e^{\ii \tilde\alpha_{xz}^{(\varphi)}} \ketbra{x,z}{x,z}.
\end{equation}
Presenting the effective phases $\tilde\alpha_{xz}^{(\varphi)} = \alpha_{xz}^{(\varphi)} - \alpha_{00}^{(\varphi)}$ in matrix form, $A_\varphi = \sum_{x,z=0}^{d-1} \tilde\alpha_{xz}^{(\varphi)}\ketbra{x}{z}$, we have
\begin{equation}
    A_\varphi = 
    \begin{pmatrix}
        0 & 0 & 0 \\[0.5em]
        \varphi & -\frac{2\pi}{3}+\varphi  & -\frac{4\pi}{3}+\varphi  \\[0.5em]
        -\varphi  & -\frac{4\pi}{3}-\varphi  &  -\frac{8\pi}{3}-\varphi
    \end{pmatrix}.
    \label{eq:phase_d3}
\end{equation}

We now want to find for which phases $\varphi$ it holds that $D_{\alpha(\varphi)}$ can be written in terms of a phase polynomial with finite precision $m$ \cite{cui2017}.
To this end, define the parameter list $L = \{1,x,z\}$.
Over $L$, we construct the basis $\mathcal{S} = \{1,x,z,xz,x^2,z^2,x^2z,xz^2,x^2z^2\}$.
The phase polynomial $f_m(x,z)$, where $m$ is the precision, is then given by a linear combination over $\mathcal S$.
Thus, we seek phases $\varphi$ and associated integer-valued phase polynomials such that
\begin{equation}\label{eq:phase_poly_rel}
    e^{\ii \tilde \alpha_{00}^{(\varphi)}} \sum_{x,z=0}^{d-1}  e^{\ii \tilde\alpha_{xz}^{(\varphi)}} \ketbra{x,z}{x,z} = e^{\ii \tilde \alpha_{00}^{(\varphi)}} \sum_{x,z=0}^{d-1}  e^{\frac{2\pi\ii}{3^m} f_m(x,z)} \ketbra{x,z}{x,z}.
\end{equation}
Note that Eq.~\eqref{eq:phase_d3} directly implies that $\tilde \alpha_{0z}^{(\varphi)} = f_m(0,z) = 0$ for $z = 0,1,2$.
Therefore, it is sufficient to consider the phase polynomial $f_m(x,z)$ over the reduced set $\tilde{\mathcal{S}} = \{x,xz,x^2,x^2z,xz^2,x^2z^2\}$, where
\begin{equation}
    f_m(x,z) = a_0x+ a_1 x^2+a_2 xz+  a_3 x^2z+ a_4xz^2+ a_5x^2z^2 \mod 3^m.
\end{equation}
We now want to find coefficients $a_j \in \mathbb Z_{3^m}$ for $j=0,\dots,5$ such that the following relation holds
\begin{equation}\label{eq:rel_phase}
    f_m(x,z) = \frac{3^m}{2 \pi}\tilde\alpha_{xz}^{(\varphi)} \mod 3^m,
\end{equation}
In matrix representation, the equation we want to solve, modulo $3^m$, reads
\begin{equation}
    \begin{pmatrix}
        0 & 0 & 0 \\[0.5em]
        a_0 & a_2  & a_4  \\[0.5em]
        a_1  & a_3  & a_5
    \end{pmatrix}
    =
    \begin{pmatrix}
        0 & 0 & 0 \\[0.5em]
        \frac{3^m}{2 \pi} \varphi & -3^{m-1}+\frac{3^m}{2 \pi}  \varphi  & -2\times3^{m-1}+\frac{2 \pi}{3^m}\varphi  \\[0.5em]
        -\frac{3^m}{2 \pi}\varphi  & -2\times3^{m-1}-\frac{3^m}{2 \pi}\varphi  &  -4\times3^{m-1}-\frac{3^m}{2 \pi}\varphi
    \end{pmatrix} \mod 3^m.
    \label{eq:phases}
\end{equation}
Clearly, for the coefficients $\{a_j\}$ to be integers in the cyclic group $\mathbb Z_{3^m}$ for finite $m$, we need that 
\begin{equation}\label{eq:angles}
    \varphi = \frac{2 \pi j}{3^l},
\end{equation}
where $j$ and $l$ are non-negative integers.
Note, that for $l = 0, 1,2$, the measurements associated with $\varphi = 2\pi j$, $\varphi = \frac{2 \pi}{3}j$ and $\varphi = \frac{2 \pi}{9}j$ are equivalent to the standard Bell state measurements for all integers $j$.
Inserting Eq.~\eqref{eq:angles} into Eq.~\eqref{eq:phases} gives that
\begin{equation}
    \begin{pmatrix}
        0 & 0 & 0 \\[0.5em]
        a_0 & a_2  & a_4  \\[0.5em]
        a_1  & a_3  & a_5
    \end{pmatrix}
    =
    \begin{pmatrix}
        0 & 0 & 0 \\[0.5em]
        3^{m-l}j & -3^{m-1}+3^{m-l}j  & -2\times3^{m-1}+3^{m-l}j  \\[0.5em]
        -3^{m-l}j  & -2\times3^{m-1}-3^{m-l}j  &  -4\times3^{m-1}-3^{m-l}j
    \end{pmatrix} \mod 3^m.
    \label{eq:final_phases}
\end{equation}
By fixing $j$, $l$, and the precision $m \geq 1$, Eq.~\eqref{eq:final_phases} gives six independent equations from we can solve for the coefficients $a_j$ of $f_m(x,z)$.
Note that, in general, one can restrict $m \geq l$.
The reason for this is that, if $m <l$, one needs to require $j \propto 3^{l-m}$ to ensure that all coefficients are integers.

When a solution to the set of linear equations exists, the Clifford level $\mathcal{C}_k$ of the diagonal gate $D_{\alpha(\varphi)}$ can easily be determined from Eq.~\eqref{eq:Clifford_level} in Appendix~\ref{App:CliffordDiagonal}.
By extension, this gives the Clifford level of the measurement.

In the following, we perform a systematic study of phases associated with diagonal gates at low levels in the Clifford hierarchy.
Due to the equivalences between different measurements in the family under local unitaries (recall that measurements for $\varphi$, $\varphi+\frac{2\pi}{9}$ and $\frac{2\pi}{9}-\varphi$ are equivalent), we restrict the study to $\varphi \in [0,\pi/9]$.

\paragraph{Precision: \textit{m} = 1}
First, let the precision be $m = 1$.
In this case $l \in \{0,1\}$.
For $l=0$ we find that, modulo 3, there is only one equation (and this is independent of $j$), namely
\begin{equation}
    \begin{pmatrix}
        0 & 0 & 0 \\[0.5em]
        a_0 & a_2  & a_4  \\[0.5em]
        a_1  & a_3  & a_5
    \end{pmatrix}
    =
    \begin{pmatrix}
        0 & 0 & 0 \\[0.5em]
        0 & 2  & 1  \\[0.5em]
        0 & 1  &  2
    \end{pmatrix}.
    \label{eq:m1l0}
\end{equation}
Solving this equation we find that the only non-zero coefficient is $a_2 = 2$, which results in the phase polynomial $f_{m = 1}(x,z) = 2xz \mod 3$.
From this, we can compute the associated Clifford level $k$ using Eq.~\eqref{eq:Clifford_level}, finding that $k = 2$.
This is expected, since the measurement associated with this phase, $\varphi = 2 \pi j$, corresponds to the generalized Bell state measurement.

Next, we consider $l = 1$.
In this case, $\varphi = \frac{2 \pi j}{3}$.
This is an integer valued multiple of $\frac{2\pi}{9}$ for all $j$, and thus the associated measurements are all equivalent to the generalized Bell state measurement.

\paragraph{Precision: \textit{m} = 2}
Next, we consider the precision $m = 2$.
Again, if $l = 0,1,2$, the measurement associated with the phase $\varphi = \frac{2 \pi j}{3^{l}}$ is equivalent to the Bell state measurement.
Thus, we do not find a phase $\varphi$ associated with a new measurement.

\paragraph{Precision: \textit{m} = 3}
Next, we consider $m = 3$.
Based on the previous discussion, we may directly consider $ l = 3$, which gives $\varphi = \frac{2 \pi j}{3^3}$.
Note that only $j = 1$ gives new phases within the interval $[0,\pi/9]$.
Thus, the equation we aim to solve reads
\begin{equation}
    \begin{pmatrix}
        0 & 0 & 0 \\[0.5em]
        a_0 & a_2  & a_4  \\[0.5em]
        a_1  & a_3  & a_5
    \end{pmatrix}
    =
    \begin{pmatrix}
        0 & 0 & 0 \\[0.5em]
        1 & 19 & 10  \\[0.5em]
        26 & 8  &  17
    \end{pmatrix} \mod 3^3.
    \label{eq:m3l3}
\end{equation}
This yields the equation $f_{m=3}(x,z) = 16 x + 18 xz + 12x^2 \mod 3^3$, i.e., $a_0 = 16, a_1 = 12, a_2 = 18$, whereas $a_3 = a_4 = a_5 = 0$.
Computing the corresponding Clifford complexity, we find that $k = 5$.

\paragraph{Precision: \textit{m} = 4}
Next, we consider $m = 4$.
To this end, we may directly consider $l = 4$ to find a new phase, yielding $\varphi = \frac{2 \pi j}{3^4}$.
For $j = 1,2,4$ we have new phases within the interval $[0,\pi/9]$.
The equation we need to solve in this case is
\begin{equation}
    \begin{pmatrix}
        0 & 0 & 0 \\[0.5em]
        a_0 & a_2  & a_4  \\[0.5em]
        a_1  & a_3  & a_5
    \end{pmatrix}
    =
    \begin{pmatrix}
        0 & 0 & 0 \\[0.5em]
        j & 54+j & 27+j  \\[0.5em]
        -j & 27-j &  54-j
    \end{pmatrix} \mod 3^4
    \label{eq:m4l4}
\end{equation}
For $j = 1$, we find the phase polynomial $f_{m=4}(x,z) = 43x + 54xz +39x^2 \mod 3^4$ with Clifford level $k = 7$.
For $j =2$ we find that $f_{m=4}(x,z) = 5x + 54xz +78x^2 \mod 3^4$ with $k = 7$.
Lastly, for $j = 4$ we find that $f_{m=4}(x,z) =10x + 54xz +75x^2$ with $k = 7$.

\paragraph{Precision: \textit{m} = 5}
For precision $m = 5$, we find nine new relevant phases for $l = 5$, namely $\varphi = \frac{2 \pi j}{3^5}$ with $j = 1,2,4,5,7,8,10,11,13$.
The equation to solve in this case reads
\begin{equation}
    \begin{pmatrix}
        0 & 0 & 0 \\[0.5em]
        a_0 & a_2  & a_4  \\[0.5em]
        a_1  & a_3  & a_5
    \end{pmatrix}
    =
    \begin{pmatrix}
        0 & 0 & 0 \\[0.5em]
        j & -81+j & 81+j  \\[0.5em]
        -j & 81-j &  162-j
    \end{pmatrix} \mod 3^5
    \label{eq:m5l5}
\end{equation}
We summarize the resulting phase polynomials and their associated Clifford level in Table~\ref{tab:phase}.

\begin{table}[ht!]
    \begin{tabular}{c|c|l|c}
        \multicolumn{1}{c|}{$\varphi$} & Precision, $m$ & Phase polynomial, $f_m(x,z)$ & Clifford level, $k$ \\ \hline
        0                             &  1 &   $ 2xz$              &     2  \\
        $\frac{2 \pi}{243}$            & 5  &  $124x + 162xz +120x^2$       & 9   \\  
        $\frac{4 \pi}{243}$            & 5  &  $5x + 162xz +240x^2$       & 9   \\  
        $\frac{2 \pi}{81}$             & 4  &  $43x + 54xz +39x^2$       & 7   \\  
        $\frac{8 \pi}{243}$            & 5  &  $10x + 162xz +237x^2$       & 9   \\ 
        $\frac{10 \pi}{243}$            & 5  &  $134x + 162xz +114x^2$       & 9   \\ 
        $\frac{4 \pi}{81}$             & 4  &  $5x + 54xz +78x^2$       & 7   \\  
        $\frac{14 \pi}{243}$            & 5  &  $139x + 162xz +111x^2$       & 9   \\ 
        $\frac{16 \pi}{243}$            & 5  &  $20x + 162xz +231x^2$       & 9   \\ 
        $\frac{2 \pi}{27}$             & 3  &  $16x + 18xz +12 x^2$       & 5   \\  
        $\frac{20 \pi}{243}$            & 5  &  $25x + 162xz +228x^2$       & 9   \\ 
        $\frac{22 \pi}{243}$            & 5  &  $149x + 162xz +105x^2$       & 9   \\ 
        $\frac{8 \pi}{81}$             & 4  &  $10x + 54xz +75x^2$       & 7   \\  
        $\frac{26 \pi}{243}$            & 5  &  $154x + 162xz +102x^2$       & 9\\ 
    \end{tabular}
    \caption{List of the measurements and corresponding phase polynomials that have low Clifford level.}
    \label{tab:phase}
\end{table}

\section{Local models for distributions created out of maximally entangled states and measurements}
\label{app:trianglelocal}
Here we aim to show that, when used in a simple network that distributes maximally entangled states, the measurements we define in Eq.~\eqref{eq:familyM} cannot produce network-nonlocal distributions.
For doing so, we prove a more general statement that does not make use of the particular structure of our family of measurements, but only the fact that their eigenstates are maximally entangled.
Thus, this proof is valid also for other maximally entangled measurements.

We consider the triangle network, where three sources distribute bipartite quantum systems among different pairs of the three parties $A$, $B$ and $C$ (each source to a different pair).
Each of the parties measures the corresponding systems received in order to produce their outcome.
The corresponding Born rule for the distributions produced this way is
\begin{align*}
    p(abc) = \Tr\left[\Big(\ketbra{\gamma}_{A_1B_2} \otimes \ketbra{\alpha}_{B_1C_2} \otimes \ketbra{\beta}_{C_1A_2}\Big)\cdot \left(\ketbra{\psi_{A,a}}_{A_1A_2} \otimes \ketbra{\psi_{B,b}}_{B_1B_2} \otimes \ketbra{\psi_{C,c}}_{C_1C_2} \right)\right],
\end{align*}
where $\ket{\alpha}$, $\ket{\beta}$ and $\ket{\gamma}$ are the states distributed by the sources, the subindices denote the Hilbert spaces in which each of the states is defined and, in the case of projective measurements, $\sum_o\ketbra{\psi_{P,o}}=\mathbb{1}$, $\braket{\psi_{P,o}}{\psi_{P,o'}}=\delta_{o,o'}$.
Moreover, we will consider the case of maximally entangled measurements, i.e., of measurements with maximally entangled eigenstates, so we have $\Tr_1\ketbra{\psi_{P,o}}=\Tr_2\ketbra{\psi_{P,o}}=\mathbb{1}/d$ for all parties $P$ and all outcomes $o$.

For all the distributions that admit this form, let us compute the two-body marginal distribution $p(a,b)$.
\begin{align*}
    p(ab) &= \sum_c p(abc) \\
    &= \sum_c \Tr\left[\Big(\ketbra{\gamma}_{A_1B_2} \otimes \ketbra{\alpha}_{B_1C_2} \otimes \ketbra{\beta}_{C_1A_2}\Big)\cdot \left(\ketbra{\psi_{A,a}}_{A_1A_2} \otimes \ketbra{\psi_{B,b}}_{B_1B_2} \otimes \ketbra{\psi_{C,c}}_{C_1C_2} \right)\right],
\end{align*}

We can perform the sum, which gives $\sum_c \ketbra{\psi_{C,c}}_{C_1C_2}=\mathbb{1}_{C_1C_2}$.
Then, splitting the trace in partial traces we obtain
\begin{equation*}
    p(ab) = \Tr_{A_1A_2B_1B_2}\left[\Big(\ketbra{\gamma}_{A_1B_2} \otimes \Tr_{C_2}\ketbra{\alpha}_{B_1C_2} \otimes \Tr_{C_1}\ketbra{\beta}_{C_1A_2}\Big)\cdot \left(\ketbra{\psi_{A,a}}_{A_1A_2} \otimes \ketbra{\psi_{B,b}}_{B_1B_2} \right)\right].
\end{equation*}

We assume that the states $\ket{\alpha}$ and $\ket{\beta}$ are maximally entangled.
Thus, their partial traces are maximally mixed, i.e., $\Tr_{C_2}\ketbra{\alpha}_{B_1C_2}=\frac1d\mathbb{1}_{B_1}$ and $\Tr_{C_1}\ketbra{\beta}_{C_1A_2}=\frac1d\mathbb{1}_{A_2}$.
Inserting these above, we arrive at
\begin{equation*}
    p(ab) = \frac{1}{d^2}\Tr_{A_1B_2}\left[\ketbra{\gamma}_{A_1B_2} \left(\Tr_{A_2}\ketbra{\psi_{A,a}}_{A_1A_2} \otimes \Tr_{B_1}\ketbra{\psi_{B,b}}_{B_1B_2} \right)\right].
\end{equation*}
Since the eigenstates of the measurements are also maximally entangled, we have that $\Tr_{A_2}\ketbra{\psi_{A,a}}_{A_1A_2}=\frac1d\mathbb{1}_{A_1}$ and $\Tr_{B_1}\ketbra{\psi_{B,b}}_{B_1B_2}=\frac1d\mathbb{1}_{B_2}$ as well.
Thus, the two-body probability distribution is
\begin{equation*}
    p(ab) = \frac{1}{d^4}\Tr_{A_1B_2}\left(\ketbra{\gamma}_{A_1B_2}\right)=\frac{1}{d^4}.
\end{equation*}
In other words, any distribution generated in the no-input triangle network when the sources distribute maximally entangled qutrit states in $\mathbb{C}^d\otimes\mathbb{C}^d$ and the parties perform maximally entangled measurements satisfies that the two-body distribution is always uniformly random.
This type of distributions (tripartite with uniform two-body marginals) admits a local model in the triangle network for any cardinality.
In fact, one of the sources is not necessary: it suffices to have the source between parties $A$ and $C$ to produce the outcome $a$, and the source between parties $B$ and $C$ to produce the outcome $b$.
Then, parties $A$ and $B$ just output the respective information they receive, while party $C$ (who receives $a$ and $b$ as well) produces her outcome according to the conditional distribution $p(c|ab)$.

This construction generalizes to networks where at least one party shares states with every other party.
In such case, every distribution in a network for $n$ parties where the $n-1$-partite marginals are uncorrelated (such as the case above of distributions created out of maximally entangled states and maximally entangled measurements) admits a local model in the network.
Note, however, that this construction does not generalize immediately to networks where there are subsets of parties that do not share states (an example being, for instance, the square network), since there is no party that can receive the outputs of all the remaining ones via the hidden variables.
In these cases, it is not known whether non-Clifford maximally entangled measurements such as the family we describe in the main text can produce nonlocality in networks that distribute maximally entangled states.
\end{document}